\def\true{\mathit{true}}
\def\false{\mathit{false}}
\def\FFF{\mathcal{F}}
\def\supp{\mathsf{supp}}
\DeclareMathOperator{\untl}{\UUU}
\DeclareMathOperator{\wkuntl}{\WWW}
\DeclareMathOperator{\dmnd}{\Diamond}
\newcommand{\caoal}[1][A]{{\lsem}{\mathit #1}{\rsem}}
\newcommand{\ceoal}[1][A]{{\llangle}{\mathit #1}{\rrangle}}
\newcommand{\caggg}[1][A]{\caoal[#1]\!\Box}
\newcommand{\ceggg}[1][A]{\ceoal[#1]\!\Box}
\newcommand{\cafff}[1][A]{\caoal[#1]\!\!\Diamond}
\newcommand{\cefff}[1][A]{\ceoal[#1]\!\!\Diamond}
\newcommand{\canxt}[1][A]{\caoal[#1]\!\!\ocircle\!}
\newcommand{\cenxt}[1][A]{\ceoal[#1]\!\!\ocircle\!}
\def\ATLK{ATL$^{\text{D}}_{iR}$}
\def\runsfin{\mathsf{Runs^f}}
\def\runsinf{\mathsf{Runs^\omega}}
\newcommand{\oomit}[1]{}
\def\sdpart{\rightharpoonup}
  \newtheorem{definition}{Definition}
  \newtheorem{theorem}[definition]{Theorem}
  \newtheorem{proposition}[definition]{Proposition}
  \newtheorem{remark}[definition]{Remark}
  \newtheorem{example}[definition]{Example}
 \newenvironment{proof}{\smallskip\emph{Proof:}}{\smallskip$\dashv$}
\title{Model-Checking an Alternating-time Temporal Logic with Knowledge, Imperfect Information, Perfect Recall
and Communicating Coalitions\thanks{Partially supported by the French National Research Agency (ANR) projects SELKIS 
(ANR-08-SEGI-018) and POLUX (ANR-06-SETIN-012).}}
\author{
C\u at\u alin Dima
 \institute{
LACL, Universit\'e Paris Est-Cr\'eteil, 
61 av. du G-ral de Gaulle, 94010 Cr\'eteil, France}
\and 
Constantin Enea\thanks{This author was partly supported by the French National Research Agency (ANR) projects Averiss (ANR-06-SETIN-001) and Veridyc (ANR-09-SEGI-016).}
\institute{LIAFA, CNRS UMR 7089, Universit\'e Paris Diderot - Paris 7, 
Case 7014, 75205 Paris Cedex 13, France}
\and 
Dimitar Guelev\thanks{This author was partly supported by Bulgarian National Science Fund Grant ID-09-0112. Work partly done while he 
was visiting the Universit\'e Paris-Est Cr\'eteil. }
\institute{Section of Logic, Institute of Mathematics and Informatics, 
Bulgarian Academy of Sciences,\\
 Acad. G. Bonchev str., bl. 8, 1113 Sofia, Bulgaria } 
}
\begin{document}

\maketitle

\begin{abstract}
\textbf{Abstract.} We present a variant of ATL with distributed knowledge 
operators based on a synchronous and perfect recall semantics. 
The coalition modalities in this logic are based on partial observation
of the full history, and incorporate a form of
cooperation between members of the coalition in which 
agents issue their actions based on the distributed knowledge, for that coalition, 
of the system history.
We show that model-checking is decidable for this logic.
The technique utilizes two variants of games with imperfect information 
and partially observable objectives, as well as a subset construction 
for identifying states whose histories are 
indistinguishable to the considered 
coalition.
\end{abstract}

\section{Introduction}

Alternating-time Temporal Logic (ATL) \cite{alur-atl1997,alur-atl-jacm2002} is a
generalization of the Computational Tree Logic (CTL) in which path
quantifiers ``$\exists$'' and ``$\forall$'' are replaced by
\emph{cooperation modalities} $\ceoal$ in which $A$ denotes a set of
{\em agents} who act as a {\em coalition}.  A formula $\ceoal\phi$
expresses the fact that the agents in {\em coalition} $A$ can
cooperate to ensure that $\phi$ holds in an appropriate type of
multiplayer game.  

The precise semantics of the cooperation modalities
varies depending on whether the knowledge that each agent has of the
current state of the game is complete or not, and whether agents can
use knowledge of the past game states when deciding on their next move
or not. These alternatives are known as {\em complete}, resp. {\em
  incomplete information}, and {\em perfect}, resp. {\em imperfect
  recall}. In the case of imperfect recall further subdivisions depend
on how much memory an agent is allowed for storing information on the
past in addition to its possibly incomplete view of the current state.
In the extreme case agents and, consequently, the strategies they can
carry out, are {\em memoryless}.  

It is known that the model-checking
problem for the case of complete information is decidable in polynomial time
\cite{alur-atl1997}.  In the case of incomplete information and perfect
recall model-checking is believed to be undecidable, a statement 
attributed to M. Yannakakis in \cite{alur-atl1997} for which there is no
self-contained proof 
that we know about.  Variants of ATL with
memoryless agents have been shown to have decidable model checking in
\cite{schobbens-atl-ir2004,goranko-irrevocable2007,lomuscio-practicalATL2006}. 
Our earlier work \cite{guelev-dima2008dalt} is about
a special case of agents with perfect recall in which model checking
is still decidable.  

Incomplete information is modelled in ATL in a
way which conforms with the possible worlds semantics of modal
epistemic logics (cf. \cite{FaginHalpernVardi}.)  Therefore, it is of no
surprise that the epistemic logic community contributed extensions of
ATL by knowledge modalities such as \emph{Alternating Temporal Epistemic Logic} 
\cite{wiebe-ATEL2003}.
Results on model-checking ATEL with memoryless strategies can be found in \cite{schobbens-atl-ir2004,goranko-irrevocable2007,kacprzak-penczek-aamas-2005,lomuscio-practicalATL2006}. Results on ATL with complete information can be found in \cite{goranko-jamroga-comparing-2004,bulling-harder-2009}.
\oomit{Yet, only partial results are
known about the model-checking problem for ATEL, some of them applying
to memoryless strategies 
\cite{schobbens-atl-ir2004,goranko-irrevocable2007,kacprzak-penczek-aamas-2005,lomuscio-practicalATL2006},
some others to strategies with complete observations \cite{goranko-jamroga-comparing-2004,bulling-harder-2009}.
(At the end of this section, we comment on the relationship between these results and 
ours.)}

In this paper we continue our investigation of ATL with knowledge
operators from \cite{guelev-dima2008dalt}, where we introduced conditions on
the meaning of the cooperation modalities which make model-checking
decidable. As in the previous paper, we do not restrict agents'
strategies to memoryless ones, but we assume that coalition members have 
a communication mechanism which enables the coalitions to  carry out strategies 
that are based on their {\em distributed knowledge}. 
(Recall that a coalition has \emph{distributed knowledge} of fact $\phi$ 
iff $\phi$ is a logical consequence of the combined knowledge of the coalition members.) 
We assume that a coalition has a strategy to achieve a goal $\phi$ 
only if the same strategy can be used in all the cases which are indistinguishable from the actual one
with respect to the distributed knowledge of the coalition. 
This choice is known as {\em de re} strategies \cite{jamroga-de-dicto2007}, and rules
out the possibility for a coalition to be able to achieve $\phi$ by
taking chances, or to be able to achieve $\phi$ in some of the cases
which are consistent with its knowledge and not in others. Therefore
in our system $\ceoal\phi$ is equivalent to $K_A\ceoal\phi$ where
$K_A$ stands for the \emph{distributed knowledge} operator (also written
$D_A$). 
We call the variant of ATL which is obtained by adopting these conventions 
{\em Alternating Time Logic with Knowledge and Communicating Coalitions} and use the acronym \ATLK{} for it to indicate distributed knowledge, incomplete information and perfect recall.
\oomit{
We call the variant of ATL which is obtained by adopting these conventions 
{\em Alternating Time Logic with Knowledge
and Communicating Coalitions} and use the acronym \ATLK{} for it,
with the $iR$ underscript inherited from \cite{schobbens-atl-ir2004}
to denote the fact that the semantics utilizes strategies with perfect recall (i.e. non-memoryless)
and with incomplete information.
}

Implementing strategies which rely on distributed knowledge requires some care. 
For instance, simply supplying coalition members with a mechanism to share their observations with 
each other would have the side effect of enhancing the knowledge at each agent's 
disposal upon considering the reachability of subsequent goals as part of possibly 
different coalitions, whereas we assume that each agent's knowledge is just what 
follows from its personal experience at all times. Therefore we assume that 
coalition activities are carried out through the guidance of corresponding 
{\em virtual supervisors} who receive the coalition members' observations and 
previously accumulated knowledge and in return direct their actions for 
as long as the coalition exists without making any additional information available.

In our previous work models are based on {\em interpreted systems} as
known from \cite{FaginHalpernVardi}. 
In that setting global system states
are tuples which consist of the local views of the individual agents
and the satisfaction of atomic propositions at a global state need not
be related to the local views in it. Unlike that, in this paper we
assume that the view of each agent is described as a set of atomic
propositions which the agent can observe. States which satisfy the
same observable atomic propositions are indistinguishable to the
agent.  Observability as in interpreted systems can be simulated in
this concrete observability semantics. However, the converse does not
hold, see \cite{dima-jlc2010} for details.

We prove our model-checking result by induction on the construction
the formula to be checked, like in model-checking algorithms for ATL
or CTL, with two significant differences. Firstly, the implicit
distributed knowledge operator hidden in the coalition operator is
handled by means of a ``subset construction'' for identifying states
with indistinguishable histories, a technique used for CTLK
model-checking in \cite{dima08clima}.
Secondly, checking whether in a given set of indistinguishable states
the coalition has a strategy to achieve goal $\phi$ 
involves building a tree automaton, which can be seen as a game between the coalition 
(supervisor) and the rest of the agents.
This game resembles the two-player games with one player having
imperfect information from \cite{chatterjee-imperfect-information2006}, but also has a notable
difference: the goal of the player with imperfect information is
\emph{not fully observable}. Such a goal can be achieved \emph{at
  different times} along different yet indistinguishable runs.
Therefore, we have a bookkeeping mechanism for the time of achieving
the goal along each run. 

The tree automata we use employ only ``occurrence'' accepting
conditions: the set of states occurring along each run of the tree is
required to belong to some given set of sets of states. No Muller
conditions, i.e., no restrictions on the set of states occurring
\emph{infinitely often}, are 
involved.

The model-checking algorithm proceeds by constructing 
\emph{refinements} of the given game arena $\Gamma$,
unlike in CTL and ATL model-checking where the 
only modifications of the given arena 
are the insertion of new propositional variables (corresponding to 
subformulas of the 
formula to model-check).
\oomit{
The reason why this refinement process is necessary comes from the
necessity to represent classes of histories identically observable by 
a set of agents $A$.
This is possible only by state-splitting (with the aid of a subset construction)  
a technique also used in model-checking epistemic variants of 
CTL or LTL with perfect recall.
}
This refinement enables telling apart classes of histories which are indistinguishable to coalition members. It involves splitting states by means of a subset construction. The technique is known from model-checking epistemic extensions of CTL or LTL with perfect recall.

The setting and techniques presented here are different from 
those in our previous work \cite{guelev-dima2008dalt}.
In \cite{guelev-dima2008dalt}, the knowledge modalities are required to have only argument formulas from the past subset of $\mathit{LTL}$. 
\ATLK{} has only future operators.
Past $\mathit{LTL}$ operators can be added to \ATLK{}
in the usual way.
Also, the model-checking algorithm for \ATLK{} is based 
on tree-automata and not on the syntactical transformation of 
past formulas as in \cite{guelev-dima2008dalt}.

Let us also note the difference between our work and the work on ATEL:
the approach proposed in ATEL is to consider that strategies 
are defined on \emph{sequences of states}, which is a \emph{perfect observability} approach. 
Hence, a formula of the 
form $\ceoal[Alice] \phi$, saying that $Alice$ has a strategy to ensure $\phi$ in a given state, refers to the situation in which $Alice$ would be able to ensure $\phi$ if she had complete information about the system state. As in general agents do not have complete information, ATEL proposes then to use knowledge operators as a means to model imperfect information. The idea is to use formulas of the 
form $K_{Alice} \ceoal[Alice] \phi$ to specify the fact that $Alice$ knows that she can enforce $\phi$ in the current state.

Unfortunately, this does not solve the \emph{unfeasible strategies} problem,
studied in \cite{goranko-jamroga-comparing-2004}. Namely, the knowledge operator in formula $K_{Alice} \ceoal[Alice] \phi$
does not give $Alice$ the ability to know what action she has to apply in the current state.
This is because the knowledge operator only gives evidence about the fact that 
\emph{strategies exist, in all identically observable states, to ensure $\phi$},
but different strategies may exist in identically observable states,
and hence $Alice$ might not be able to know what strategy she is to apply after some 
sequence of observations.

Another argument against the possibility to encode the setting from 
e.g. \cite{schobbens-atl-ir2004} into the ATEL setting from \cite{wiebe-ATEL2003}
refers to the difficulty of giving a fixpoint definition to the operators involving $\ceoal[Alice]$.
The reason is that, for formulas of the form $\cefff \phi$, it is possible that $\phi$ becomes satisfied at different times along different yet indistinguishable runs. Hence, despite that $Alice$ can enforce $\phi$ by means of a fixed strategy, she might be unable to tell when $\phi$ happens. At best, in case every global state has only finitely many successors, $Alice$ would eventually be able to tell that \emph{$\phi$ must have been achieved}.
This observation is related with the bookkeeping mechanism used in Subsection \ref{sec:4.2} here, in 
the association of a tree automaton with each subformula of the form $\ceoal \phi_1 \untl \phi_2$.
\oomit{
For instance, for formulas of the type $\cefff \phi$, this difficulty stems from the fact that
the instance where $p$ is finally satisfied is not the same for identically observable runs.
This means that $Alice$ might know that, in a given state, 
she can enforce $\phi$ by playing some fixed strategy, but, along the strategy,
she might not know exactly when $\phi$ happens -- she is only able to know
that, at some point, \emph{$\phi$ must have happened in the past}.
This remark is also related with the bookkeeping mechanism used in 
the association of a tree automaton with each subformula of the type $\ceoal \phi_1 \untl \phi_2$.
}

In conclusion, we believe that there is little hope to encode the imperfect information setting
studied here within the ATEL framework from \cite{wiebe-ATEL2003,goranko-jamroga-comparing-2004}.

\noindent
{\em Structure of the paper } The next section recalls some basic
notions and notations used throughout the paper, including the
tree automata that are used in the model-checking algorithm.  Section
\ref{atldef} presents the syntax and semantics of \ATLK{}.  Section
\ref{sec3} gives the constructions involved in the model-checking
algorithm: the subset construction for identifying indistinguishable
histories, and then the tree automata for handling formulas of the
forms 
$\ceoal p_1 \wkuntl p_2$
and $\ceoal p_1\untl p_2$, respectively. 
We conclude by a summary of our result, discussion and topics of further work.
\oomit{We finish with conclusions, discussions and further work.
}

\section{Preliminaries}
\label{prelims}

Given a set $A$, $A^*$ stands for the set of finite sequences over $A$. 
The empty sequence is denoted by $\eps$. The prefix order between sequences is denoted by $\preceq$ and
the concatenation of sequences by $\cdot$.
The {\em direct product} of a family of sets $(X_a)_{a\in A}$
is denoted by $\prod_{a\in A} X_a$. 
An element $x$ 
of $\prod_{a\in A} X_a$ will be written in the
form $x=(x_a)_{a\in A}$, where $x_a\in X_a$ for all $a\in A$. 
If $B\subseteq A$, then $x\restr{B}=(x_b)_{b\in B}$ stands for the 
{\em restriction} of $x$ to $B$.
If the index set $A$ is a set of natural numbers and $n \in A$, 
then $x\restr{n}$ stands for 
$x\restr{\{n\}}$.
The \emph{support}
$\supp(f)$
of a partial function $f : A \sdpart B$ is the subset of elements of $A$ 
on which the function is defined.

Given a set of symbols $\Delta$, 
a \emph{$\Delta$-labeled tree} is a partial function 
$t : \Nset^* {\sdpart} \Delta$ 
such that
\begin{enumerate}
\item $\eps \in \supp(t)$.
\item The support of $t$ is prefix-closed: if $x \in \supp(t)$ and 
$y\preceq x$, then $y  \in \supp(t)$.
\item Trees are ``full'': if $xi \in \supp(t)$, then $xj \in \supp(t)$ for all $j\leq i$ too.
\item All tree branches are infinite:
If $x \in \supp(t)$ then $x0 \in \supp(t)$ too.
\end{enumerate}
Elements of $\supp(t)$ are called {\em nodes} of $t$.
A \emph{path} in 
$t$ is an infinite sequence of nodes $\pi = (x_k)_{k\geq 0}$ such that for all $k$,
$x_{k+1}$ is an {\em immediate} successor of $x_k$, i.e. $x_{k+1} = x_kl$ for some $l \in \Nset$.
\oomit{
The path is \emph{initialized} if it starts with the tree root, $x_0 = \eps$.
}
Path $(x_k)_{k\geq 0}$ is \emph{initialized} if $x_0$ is the tree root $\eps$.
We denote the set of labels on the path $\pi$, that is, $\{t(x_k)\mid k\geq 0\}$, by $t(\pi)$.

Below we use tree automata $\AAA = (Q, \Sigma, \delta, Q_0, \FFF)$ in which $Q$ is the set of {\em states}, 
$\Sigma$ is the \emph{alphabet}, 
$Q_0\subseteq Q$ is the set of the {\em initial} states, $\delta \subseteq Q \times \Sigma \times (2^Q\setminus \emptyset)$ is the {\em transition relation} 
and the acceptance condition $\FFF$ is a subset of $2^Q$.

Tree automata accept $Q\times \Sigma $-labelled trees.
A tree $t : \Nset ^* \sdpart Q \times \Sigma$ represents an \emph{accepting run} in $\AAA$ iff:
\begin{enumerate}
\item $t(\eps) \in Q_0 \times \Sigma$.
\item If $x \in \supp(t)$, then $t(xi)\restr{Q}\not= t(xj)\restr{Q}$ 
whenever $i\not=j$, and $(t(x)\restr{Q},t(x)\restr{\Sigma}, \{t(xi)\restr{Q}\mid xi\in \supp(t)\}) \in \delta$.
\item $t(\pi)\restr{Q}\in\FFF$ for all \emph{initialized} paths $\pi\subseteq\supp(t)$.
\end{enumerate}
Note that we only consider automata with ``occurrence'' accepting conditions: an initialized path is accepting if the 
set of states \emph{occurring} on the path is a member of $\FFF$, even if some of these states occur only finitely many times. 

\begin{theorem}[\cite{thomas97handbook}]
The emptiness problem for tree automata with ``occurrence'' accepting conditions,
i.e., the problem of checking whether, given a tree automaton $\AAA$,
there exists an accepting run in $\AAA$, is decidable.
\end{theorem}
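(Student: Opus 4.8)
The plan is to reduce the problem to the emptiness problem for ordinary nondeterministic Muller tree automata over full $d$-ary infinite trees, which is classically decidable (Rabin's theorem; equivalently, via the translation of Muller into parity conditions, through the decidability of parity games). Two features of the automata considered here depart from that classical setting and must be handled: the acceptance condition is of the non-standard ``occurrence'' type rather than a Muller condition, and the branching degree of the accepted trees is not fixed a priori but chosen, node by node, by the transition relation $\delta$. The alphabet component $\Sigma$ is finite and, as far as emptiness is concerned, may be absorbed into the state space (equivalently, guessed freely at every node), so it raises no issue and is suppressed below.

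First I would convert the occurrence condition into a Muller condition by a product construction that records the set of states seen so far along the current path. Let $\AAA'$ have state set $Q \times 2^Q$, where $(q,R)$ is meant to hold when the current state is $q$ and $R$ is exactly the set of states labelling the path from the root to the current node inclusive; the transitions of $\AAA'$ copy those of $\AAA$ on the first component and update the second by $R \mapsto R \cup \{q'\}$ at each successor state $q'$. Along any path of $\AAA'$ the second component is non-decreasing in the finite lattice $2^Q$, hence it stabilises, its stable value equals the occurrence set of the corresponding path of $\AAA$, and every state occurring infinitely often on the path carries this common stable value. Declaring a set $F \subseteq Q \times 2^Q$ accepting iff all its elements share one second component $R$ and $R \in \FFF$ therefore yields a Muller condition $\mathcal{M}$ such that the trees accepted by $(\AAA',\mathcal{M})$ correspond exactly to those accepted by $\AAA$; in particular $\AAA$ is non-empty iff $(\AAA',\mathcal{M})$ is.

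Next I would make the branching degree uniform. Because siblings in a run carry pairwise distinct $Q$-components, every node has at most $d := |Q|$ children, so runs live on trees of branching degree at most $d$. I would pad to exactly $d$ children by adding a fresh sink state $\top$ with a self-loop on every letter and replacing every transition whose successor set $S$ has $|S| = k < d$ by transitions listing the $k$ states of $S$ followed by $d-k$ copies of $\top$, while extending $\mathcal{M}$ with the single set $\{\top\}$. This padding is faithful precisely because, after the previous step, acceptance depends only on the set of infinitely recurring states: a padding branch ends in $\top^{\omega}$, so its recurrence set is $\{\top\} \in \mathcal{M}$ no matter what prefix it shares with a genuine branch --- which is exactly the point that would break if one attempted to pad while still working with the prefix-sensitive occurrence condition. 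The outcome is a standard nondeterministic Muller tree automaton on full $d$-ary trees.

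Finally I would invoke the classical decidability of emptiness for nondeterministic Rabin/Muller tree automata (or reduce Muller to parity and use decidability of parity games) on this padded automaton; by the two equivalences just established, this decides emptiness of the original $\AAA$. I expect the only genuine subtlety to be the interplay between the non-standard acceptance condition and the variable branching degree: obtaining a correct padding forces the occurrence-to-Muller conversion to be carried out first, and one has to check with some care that the run conditions imposed in the excerpt --- pairwise distinct $Q$-labels among siblings, fullness, infiniteness of branches --- are preserved by each of the two translations.
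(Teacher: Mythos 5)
The paper gives no proof of this statement at all: it is imported verbatim with a citation to \cite{thomas97handbook}, so there is nothing internal to compare against. Your argument is a correct filling-in of that citation, and it is essentially the standard one: absorb $\Sigma$, record the set of states seen so far in a second component so that the (prefix-sensitive) occurrence condition becomes a genuine Muller condition on the infinity set (the second component is monotone in the finite lattice $2^Q$, stabilises at the occurrence set, and tags every infinitely recurring state), use the pairwise-distinctness of sibling $Q$-labels to bound the branching by $d=|Q|$, pad with a sink to get a full $d$-ary Muller automaton, and invoke Rabin-style decidability of its emptiness. All the run conditions (distinct siblings, fullness, infinite branches) are indeed preserved by your two translations, and the spurious Muller sets $\{(q,R)\}$ with $q\notin R$ that your definition admits are harmless since they can never be infinity sets of the product.

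One small quibble: your claim that a correct padding \emph{forces} the occurrence-to-Muller conversion to be done first is overstated. One can pad first and stay with occurrence conditions by enlarging the accepting family with every set containing the sink state $\top$ (still a finite family), since padded branches are exactly those whose occurrence set contains $\top$ and genuine branches never see it. What would break is only the naive padding that leaves $\FFF$ untouched, which is the scenario you describe. This does not affect the correctness of your proof; the order you chose works and is arguably the cleaner route.
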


\section{Syntax and semantics of \ATLK{}}
\label{atldef}

Throughout this paper we fix a non-empty finite set $Ag$ of \emph{agents} and, for each $a \in Ag$, 
a set of \emph{atomic propositions} $Prop_a$, 
which are assumed to be observable to $a$.
Given $A \subseteq Ag$, we write $Prop_A$ for $\bigcup_{a \in A} Prop_a$. We abbreviate $Prop_{Ag}$ to $Prop$. 

\subsection{Game arenas}

\begin{definition}
A {\em game arena} is a tuple 
$\Gamma=(Ag,Q,(C_a)_{a\in Ag},\delta,Q_0,(Prop_a)_{a\in Ag},\lambda)$,
where 
\begin{itemize}
\item $Ag$ and $Prop_a$, $a\in Ag$, are as above.
\item $Q$ is a set of {\em states}, 
\item $C_a$ is a finite sets of {\em actions} available to agent $a$.
We write $C_A$ for $\prod_{a\in A}C_a$ and $C$ for $C_{Ag}$.
\item $Q_0\subseteq Q$ is the set of {\em initial states}.
\item $\lambda : Q \sd 2^{Prop}$ is the {\em state-labeling function}.
\item 
$\delta: Q\times C\sd (2^Q\setminus \emptyset)$ 
is the {\em transition relation}.
\end{itemize}
\end{definition}
An element $c\in C$ will be called an {\em action tuple}.  We write
$q\sdup{c} r$ for {\em transitions} $(q,c,r)\in\delta$. We define
$\lambda_A:Q \sd 2^{Prop_A}$, $A\subseteq Ag$, by putting
$\lambda_A(q)=\lambda(q)\cap Prop_A$.  We assume that $\lambda$ and
$\lambda_A$ are defined on subsets $S$ of $Q$ by putting
$\lambda(S)=\bigcup\limits_{q\in S}\lambda(q)$ for $\lambda$, and
similarly for $\lambda_A$.  

Given an arena $\Gamma$, a {\em run}
$\rho$ is a sequence of transitions $q_i'\sdup{c_i} q_i''$ such that
$q_{i+1}'=q_i''$ for all $i$.  We write $\rho=(q_{i-1}\sdup{c_i}
q_i)_{1\leq i\leq n}$, resp. $\rho=(q_{i-1}\sdup{c_i} q_i)_{i\geq 1}$
for finite, resp. infinite runs. The {\em length} of 
$\rho$, denoted $|\rho|$, is the number of its transitions. This is $\infty$ for infinite
runs. A run $\rho=q_0\sdup{c_1}q_1\sdup{c_2}\ldots$ is {\em
  initialized} if $q_0\in Q_0$. $\runsfin(\Gamma)$ denotes the set of initialized finite runs and
$\runsinf(\Gamma)$ denotes the set of initialized infinite runs of $\Gamma$.

Given a run $\rho=q_0\sdup{c_1}q_1\sdup{c_2}\ldots$, we denote $q_i$
by $\rho[i]$, $i=0,\ldots,|\rho|$, and $c_{i+1}$ by $act(\rho,i)$,
$i=0,\ldots,|\rho|-1$. We write $\rho[0..i]$ for the {\em prefix}
$q_0\sdup{c_1}q_1\sdup{c_1}\ldots\sdup{c_i}q_i$ of $\rho$ of length
$i$.

A \emph{coalition} is a subset of $Ag$.
Given a coalition $A$, $S\subseteq Q$, $c_A\in C_A$, and 
$Z\subseteq Prop_A$, the following set denotes the {\em outcome of $c_A$ from $S$, labeled with $Z$}:
\[
  out(S,c_A,Z) = \{s'\in Q \mid (\exists s\in S,  \exists c'\in C) 
               c'\restr{A} = c_A, s\sdup{c'}s' \in\delta \text{ and } \lambda_A(s')=Z\}
\]
whereas those from $Prop_A\setminus Z$ are false.

Runs $\rho$ and $\rho'$ are {\em indistinguishable (observationally equivalent)} to coalition $A$, denoted $\rho \sim_A \rho'$, if $|\rho|=|\rho'|$,
$act(\rho,i)\restr{A}=act(\rho',i)\restr{A}$ for all $i<|\rho|$, and
$\lambda_A(\rho[i])=\lambda_A(\rho'[i])$ for all $i\leq |\rho|$.

\begin{definition}
A \emph{strategy} for a coalition $A$ is any mapping 
$\sigma : (2^{Prop_A})^* \sd C_A$.
\end{definition}
We write $\Sigma(A,\Gamma)$ for the set of all strategies of coalition
$A$ in game arena $\Gamma$.

Note that, instead of describing strategies for coalitions as tuples
of strategies for their individual members with every member choosing
its actions using just its own view of the past, we assume a {\em
  joint} strategy in which the actions of every coalition member
depend on the combined view of the past of all the members.
We may therefore assume that the coalition is 
guided by a supervisor who receives the members' view of the current 
state, and, in return, advices every coalition member of its next action. 
The supervisor sends no other information.
We refer the reader to a short discussion in the last section, on this supervisor interpretation 
of joint strategies.
\oomit{
We may therefore consider that the coalition is 
ruled by a supervisor, which receives 
from all the agents the information relative to the agents' view of the current 
state, and, in return, decides that some action tuple is to be issued,
and sends individual orders to each agent, requesting them to issue their action 
from the action tuple. 
On the other hand, the supervisor is not supposed to send back any supplimentary
information about the current state to any of the agents.
We refer the reader to a short discussion in the last section, on this supervisor interpretation 
of joint strategies.
}

Finite sequences of subsets of $Prop_A$ will be called \emph{$A$-histories}.

Strategy $\sigma$ for coalition $A$ is {\em compatible} 
with a run $\rho = q_0\sdup{c_1}q_1\sdup{c_2}\ldots$ if
\[
\sigma(\lambda_A(\rho[0])\cdots \lambda_A(\rho[i]))= c_{i+1}\restr{A}
\]
for all $i\leq|\rho|$. Obviously if $\sigma$ is compatible with 
run $\rho$ then it is compatible with any run that is 
indistinguishable from $\rho$ to $A$. 

\subsection{\ATLK{} defined}

The syntax of \ATLK{} formulas $\phi$ can be defined by the 
grammar
\[
\phi ::= p \mid \phi\wedge\phi  \mid \neg\phi  \mid \cenxt\phi \mid \ceoal\phi\untl\phi \mid 
\ceoal \phi_1 \wkuntl \phi_2 \mid K_A\phi
\]
where $p$ ranges over the set $Prop$ of atomic propositions,
and $A$ ranges over the set of subsets of $Ag$. 

Below it becomes clear that admitting $\wkuntl$ as a basic temporal connective allows us to introduce all the remaining combinations of $\ceoal$ and its dual $\caoal$ and the temporal connectives as syntactic sugar (see \cite{bulling-harder-2009,oreiby-2008} for more details).
\oomit{
Note that, according to \cite{bulling-harder-2009,oreiby-2008},
we include the weak-until operator in order to fully capture 
all the temporal operators inside the coalition operators.
}
Satisfaction of \ATLK{} formulas is defined with respect to a given
arena $\Gamma$, a run $\rho\in\runsinf(\Gamma)$ and a position $i$ in
$\rho$ by the clauses:
\begin{itemize}
\setlength\itemsep{1ex}
\item $(\Gamma,\rho,i)\models p$ if $p\in \lambda(\rho[i])$.
\item $(\Gamma,\rho,i) \models \phi_1 \wedge \phi_2$,  
      if $(\Gamma,\rho,i)\models \phi_1$ and 
      $(\Gamma,\rho,i)\models\phi_2$.
\item $(\Gamma,\rho,i) \models \neg \phi$ if
      $(\Gamma,\rho,i) \not \models \phi$.
\item $(\Gamma,\rho,i) \models \cenxt\phi$ if 
      there exists a strategy $\sigma\in \Sigma(A,\Gamma)$ such that 
      $(\Gamma,\rho',i+1) \models \phi$ for all runs
      $\rho'\in\runsinf(\Gamma)$ which are compatible with $\sigma$
      and satisfy $\rho'[0..i] \sim_A \rho[0..i]$.
\item $(\Gamma,\rho,i) \models \ceoal\phi_1\untl\phi_2$ iff
      there exists a strategy $\sigma\in \Sigma(A,\Gamma)$  such that 
      for every run $\rho'\in\runsinf(\Gamma)$ which is compatible with $\sigma$
      and satisfies $\rho'[0..i] \sim_A \rho[0..i]$ there exists $j\geq i$
      such that $(\Gamma,\rho',j) \models \phi_2$ and  
      $(\Gamma,\rho',k) \models \phi_1$ for all $k=i,\ldots,j-1$.
\item $(\Gamma,\rho,i) \models \ceoal\phi_1\wkuntl\phi_2$ iff
      there exists a strategy $\sigma\in \Sigma(A,\Gamma)$  such that 
      for every run $\rho'\in\runsinf(\Gamma)$ which is compatible with $\sigma$
      and satisfies $\rho'[0..i] \sim_A \rho[0..i]$ one of the two situations occur: 
\begin{enumerate}
        \item Either there exists $j\geq i$
      such that $(\Gamma,\rho',j) \models \phi_2$ and  
      $(\Gamma,\rho',k) \models \phi_1$ for all $k=i,\ldots,j-1$.
\item Or $(\Gamma,\rho',k) \models \phi_1$ for all $k \geq i$.
\end{enumerate}
\item $(\Gamma,\rho,i) \models K_A \phi$ iff 
      $(\Gamma,\rho',i) \models \phi$, for all runs 
      $\rho'\in\runsinf(\Gamma)$ 
      which satisfy $\rho'[0..i] \sim_A \rho[0..i]$.
\end{itemize}

The rest of the combinations between the temporal connectives and the cooperation modalities $\ceoal$ and $\caoal$ are defined as follows:
\begin{align*}
P_A \phi & = \neg K_A \neg \phi & \canxt\phi& = \neg\cenxt\neg\phi \\
\caoal\phi\untl\psi &= \neg\ceoal(\neg\psi\wkuntl\neg\psi\wedge\neg\varphi) &
\caoal\phi\wkuntl\psi &= \neg\ceoal(\neg\psi\untl\neg\psi\wedge\neg\varphi)\\
\cefff \phi & = \ceoal \true \untl \phi & 
\ceggg \phi & = \ceoal \phi \wkuntl \false \\
\cafff \phi & = \caoal\true\untl\phi & 
\caggg \phi & = \caoal\phi\wkuntl\false 
\end{align*}
\oomit{The following abbreviations are common to all variants of ATL:
\begin{align*}
\cefff \phi & = \ceoal \true \untl \phi & 
\ceggg \phi & = \ceoal \phi \wkuntl \false \\
\cafff \phi & = \neg \ceggg \neg \phi & 
\caggg \phi & = \neg \cefff \neg \phi \\ 
P_A \phi & = \neg K_A \neg \phi 
\end{align*}
}

A formula $\phi$ is \emph{valid in a game arena $\Gamma$}, written
$\Gamma \models \phi$, if $(\Gamma,\rho,0)\models\phi$ for all
$\rho\in\runsinf(\Gamma)$. The \emph{model-checking problem} for
\ATLK{} is to decide whether $\Gamma \models \phi$ for a given formula
$\phi$ and arena $\Gamma$.

\oomit{
\begin{remark}
Consider a tree $t$ over $Q\times C$ which satisfies the following properties:
\begin{quote}
For each $x \in \supp(t)$ and $c \in C$ there exists $i \in \Nset$ with 
$xi \in \supp(t)$ and $q' \in Q$ such that $t(xi) = (q',c)$.
\end{quote}
Such a tree can be regarded as a game arena: its nodes are the states of the game arena,
and the relation between a node and its sons gives the transition relation.

Conversely, to each game arena $\Gamma$ one may associate a tree $t : \Nset \sdpart Q\times C$,
which is the usual \emph{unfolding} of $\Gamma$:
nodes are labeled with tuples consisting of a state in $Q$ and a ``guess'' of the 
actions that each agent will play in that state, and 
the set of transitions originating in a state $q$ and with a tuple of 
actions $c \in C$ gives rise to the 
relation between a node labeled with $(q,c)$ and its successors,
successors, which are labeled with all combinations $(r,c')$ with $(q,c,r) \in\delta $ and $c'\in C$ arbitrary.
Actually one may associate many trees, all of which are similar. 
\end{remark}
\begin{proposition}
For each game arena $\Gamma$ and each \ATLK{} formula $\phi$, 
$\Gamma \models \phi$ if and only if for any tree unfolding $t$ of $\Gamma$ we have that 
$t\models \phi$. 
\end{proposition}
}

\begin{example}\label{ex:ex}

 Alice and Bob, married, work in the same company.
  When they  arrive at work, 
	they are assigned (by some non-modeled agent) 
  one of the tasks $x$ or $y$. These
  tasks need different periods of time to be executed: $tx$ time units
  for $x$ and $ty$ time units for $y$, where $tx<ty$. 
	The assignment is always such that 
  task $y$ cannot be assigned to both Alice and Bob. After they finished
  executing their task, Alice and Bob have two objectives: (1) to pick
  their child from the nursery, and (2) to do the shopping. 
  The supermarket closes early, so the one who does the longest task
  cannot do the shopping.
  So Alice and Bob need to exchange information about their 
  assigned task in order to fix who's to do the shopping and 
  who's to pick the child from the nursery.
   
  Figure \ref{fig:ex} pictures the game arena representing this
  system. 
	The actions 
  Alice and Bob can do are: $g$ for going at work,
  $e$ for working on their task, $tc$ for taking
  the child, $ds$ for doing the shopping, and $i$ for idling. 
  The atomic proposition
  $xx$ denotes the assignment of task $x$ to both Alice and Bob, $xy$
  denotes assignment of task $x$ to Alice and task $y$ to Bob, and $yx$
  denotes assignment of task $y$ to Alice and task $x$ to Bob. All these atomic propositions 
  are not observable by the two agents. The atomic
  propositions $x_a$ and $y_a$ are
  observed only by Alice, and the atomic propositions $x_b$ and $y_b$ are observed only by Bob. 
  All these four atoms denote the fact
  that the respective person has to execute task $x$ or $y$. Furthermore, the atomic
  propositions $tx_a$ and $ty_a$ which  are
  observed only by Alice, and $tx_b$ and $ty_b$, which are observed only by Bob, 
  denote the fact that the respective person has been working for $tx$ or $ty$ time units. The atomic
  propositions $c$, $s$, can be observed by both Alice and Bob and
  denote the fact that the child was picked from the nursery, 
  and, respectively, that the fridge is full with food from the supermarket. 
An arc labeled by two vectors of actions, e.g.
  $(tc,ds)\ (ds,tc)$, denotes two arcs with the same origin and the
  same destination, each one of them labeled by one of the vectors.
  
  We suppose that the game arena contains a {\em sink state} which is the output of all the transitions 
  not pictured in Figure \ref{fig:ex} (for instance, both agents idling in state $q_6$ brings the sistem to the sink state). 
  Also, we suppose that all the states except for the sink state are labeled by some atomic proposition $valid$ visible to Alice.
  
  An interesting property for this system is that Alice and Bob can
  form a coalition in order to pick their child and do
  their shopping (if we ignore the sink state)-- that is, the following formula is true:
\[
\phi= \ceoal[\{Alice,Bob\}] (valid\ \untl\ c\wedge s)
\]

Note that Alice and Bob need a strategy which must include some 
communication during its execution, which would help each of them to know 
who is assigned which task during the day, and hence who cannot do the shopping.
Note also that the model incorporates some timing information,
such that the two agents need a strategy with perfect recall in order
to reach their goal: after working $tx$ time units both Alice and Bob must use their observable past
to remember if they have finished working.
Finally, note that, if we consider that strategies for
coalitions are tuples of strategies for individual members, as in 
\cite{alur-atl1997,schobbens-atl-ir2004} then
the formula $\phi$ is false: 
whatever decision Alice and Bob take together, in the morning,
about who is to pick the child, who is to do shopping,
and in what observable circumstances (but without exchanging any information), 
can be countered by the task assignment,
which would bring Alice and Bob at the end of the day either with an empty fridge
or  the child spending his night at the nursery.
\end{example}

\setlength{\unitlength}{1.7pt}

\vspace{-0.7cm}
\begin{figure}[ht]
\begin{center}
\hspace{-3cm}
\begin{center}
\begin{picture}(264,131)(0,-131)
\node[Nw=18.7,Nh=14.82,Nmr=7.41](n1)(12.0,-64.0){}

\node[NLangle=0.0,Nw=18.7,Nh=14.82,Nmr=7.41](n11)(44.0,-36.0){{\small $yx$}}

\node[NLangle=0.0,Nw=18.7,Nh=14.82,Nmr=7.41](n12)(44.0,-64.0){{\small $xx$}}

\node[NLangle=0.0,Nw=18.7,Nh=14.82,Nmr=7.41](n13)(44.0,-92.0){{\small $xy$}}

\drawedge[linewidth=1.0,ELdist=0.0,AHangle=29.62,AHLength=2.93](n1,n11){{\footnotesize $(g,g)$}}

\drawedge[linewidth=1.0,ELdist=0.0,AHangle=32.62,AHLength=2.97](n1,n12){{\footnotesize $(g,g)$}}



\drawedge[linewidth=1.0,ELpos=60,ELdist=0.0,AHangle=28.44,AHLength=2.73](n1,n13){{\footnotesize $(g,g)$}}


\node[NLangle=0.0,Nw=18.7,Nh=14.82,Nmr=7.41](n14)(151.86,-36.0){{\small $ty_a,tx_b$}}

\node[NLangle=0.0,Nw=18.7,Nh=14.82,Nmr=7.41](n15)(151.86,-64.0){{\small $tx_a,tx_b$}}

\node[NLangle=0.0,Nw=18.7,Nh=14.82,Nmr=7.41](n16)(151.86,-92.0){{\small $tx_a,ty_b$}}

\node[NLangle=0.0,Nw=18.7,Nh=14.82,Nmr=7.41](n17)(231.86,-20.0){$c,s$}

\node[NLangle=0.0,Nw=18.7,Nh=14.82,Nmr=7.41](n18)(243.62,-64.0){$c$}

\node[NLangle=0.0,Nw=18.7,Nh=14.82,Nmr=7.41](n19)(235.73,-107.87){$s$}

\drawedge[linewidth=1.0,ELdist=0.0,AHangle=27.47,AHLength=2.82](n14,n17){{\footnotesize $(tc,ds)$}}

\drawedge[linewidth=1.0,ELpos=85,ELdist=0.0,AHangle=26.57,AHLength=2.68](n14,n18){{\footnotesize $(tc,tc)(ds,tc)$}}

\drawedge[linewidth=1.0,ELpos=85,ELdist=0.0,AHangle=27.03,AHLength=2.75](n14,n19){{\footnotesize $(ds,ds)$}}

\drawedge[linewidth=1.0,ELpos=60,ELdist=0.0,AHangle=26.57,AHLength=2.57](n15,n18){{\footnotesize $(tc,tc)$}}

\drawedge[linewidth=1.0,ELpos=55,ELdist=-12.0,AHangle=23.86,AHLength=2.84](n15,n19){{\footnotesize $(ds,ds)$}}

\drawedge[linewidth=1.0,ELpos=80,ELdist=-13.0,AHangle=27.05,AHLength=2.64](n16,n17){{\footnotesize $(ds,tc)$}}

\drawedge[linewidth=1.0,ELpos=80,ELdist=-14.0,AHangle=25.2,AHLength=2.82](n16,n18){{\footnotesize $(tc,tc) (tc,ds)$}}

\drawedge[linewidth=1.0,ELdist=-10.0,AHangle=22.78,AHLength=2.71](n16,n19){{\footnotesize $(ds,ds)$}}

\drawedge[linewidth=1.0,ELpos=45,ELdist=0.0,AHangle=30.47,AHLength=2.96](n15,n17){{\footnotesize $(tc,ds)(ds,tc)$}}

\node[NLangle=0.0,Nw=18.7,Nh=14.82,Nmr=7.41](n29)(76.0,-36.0){{\small $tx_a,tx_b$}}

\node[NLangle=0.0,Nw=18.7,Nh=14.82,Nmr=7.41](n30)(76.0,-92.0){{\small $tx_a,tx_b$}}

\drawedge[linewidth=1.0,ELdist=0.0,AHangle=26.05,AHLength=2.5](n11,n29){{\footnotesize $(i,i)$}}

\drawedge[linewidth=1.0,ELdist=0.0,AHangle=22.78,AHLength=2.71](n13,n30){{\footnotesize $(i,i)$}}



\node[NLangle=0.0,Nw=18.7,Nh=14.82,Nmr=7.41](n33)(76.0,-64.0){{\small $tx_a,tx_b$}}

\drawedge[linewidth=1.0,ELdist=0.0,AHangle=27.05,AHLength=2.64](n12,n33){{\footnotesize $(i,i)$}}


\node[NLangle=0.0,Nw=18.7,Nh=14.82,Nmr=7.41](n35)(116.0,-36.0){{\small $tx_a,tx_b$}}

\node[NLangle=0.0,Nw=18.7,Nh=14.82,Nmr=7.41](n36)(116.0,-92.0){{\small $tx_a,tx_b$}}

\drawedge[linewidth=1.0,ELdist=0.0,AHangle=28.95,AHLength=2.69](n29,n35){{\footnotesize $(e,e)$}}

\drawedge[linewidth=1.0,ELdist=0.0,AHangle=26.57,AHLength=2.8](n33,n15){{\footnotesize $(e,e)$}}

\drawedge[linewidth=1.0,ELdist=0.0,AHangle=26.57,AHLength=2.68](n30,n36){{\footnotesize $(e,e)$}}

\drawedge[linewidth=1.0,ELdist=0.0,AHangle=25.64,AHLength=2.77](n35,n14){{\footnotesize $(e,i)$}}

\drawedge[linewidth=1.0,ELdist=0.0,AHangle=25.6,AHLength=2.66](n36,n16){{\footnotesize $(i,e)$}}


\node[Nframe=n,NLangle=0.0,Nw=9.52,Nh=7.76,Nmr=0.0](n41)(44.0,-24.0){$q_1$}

\node[Nframe=n,NLangle=0.0,Nw=9.52,Nh=7.76,Nmr=0.0](n42)(44.0,-52.0){$q_2$}

\node[Nframe=n,NLangle=0.0,Nw=9.52,Nh=7.76,Nmr=0.0](n43)(44.0,-104.0){$q_3$}

\node[Nframe=n,NLangle=0.0,Nw=9.52,Nh=7.76,Nmr=0.0](n20)(76.0,-24.0){$q_4$}

\node[Nframe=n,NLangle=0.0,Nw=9.52,Nh=7.76,Nmr=0.0](n21)(76.0,-52.0){$q_5$}

\node[Nframe=n,NLangle=0.0,Nw=9.52,Nh=7.76,Nmr=0.0](n22)(76.0,-104.0){$q_6$}

\node[Nframe=n,NLangle=0.0,Nw=9.52,Nh=7.76,Nmr=0.0](n23)(116.0,-24.0){$q_7$}

\node[Nframe=n,NLangle=0.0,Nw=9.52,Nh=7.76,Nmr=0.0](n24)(152.0,-24.0){$q_9$}

\node[Nframe=n,NLangle=0.0,Nw=9.52,Nh=7.76,Nmr=0.0](n25)(152.0,-52.0){$q_{10}$}

\node[Nframe=n,NLangle=0.0,Nw=9.52,Nh=7.76,Nmr=0.0](n26)(116.0,-104.0){$q_8$}

\node[Nframe=n,NLangle=0.0,Nw=9.52,Nh=7.76,Nmr=0.0](n27)(152.0,-104.0){$q_{11}$}

\node[Nframe=n,NLangle=0.0,Nw=9.52,Nh=7.76,Nmr=0.0](n28)(232.0,-8.0){$q_{12}$}

\node[Nframe=n,NLangle=0.0,Nw=9.52,Nh=7.76,Nmr=0.0](n29)(252.0,-74.0){$q_{13}$}

\node[Nframe=n,NLangle=0.0,Nw=9.52,Nh=7.76,Nmr=0.0](n30)(236.0,-120.0){$q_{14}$}

\drawloop[linewidth=1.0,AHangle=23.5,AHLength=2.51,loopdiam=10.49,loopangle=151.14](n17){{\footnotesize $(i,i)$}}

\drawloop[linewidth=1.0,AHangle=23.5,AHLength=2.51,loopdiam=9.48,loopangle=40](n18){{\footnotesize $(i,i)$}}

\drawloop[linewidth=1.0,AHangle=23.5,AHLength=2.51,loopdiam=9.48,loopangle=55](n19){{\footnotesize $(i,i)$}}
\end{picture}
\end{center}
\end{center}
\vspace{-0.8cm}
\caption{A game arena for Example \ref{ex:ex}}
\label{fig:ex}
\end{figure}
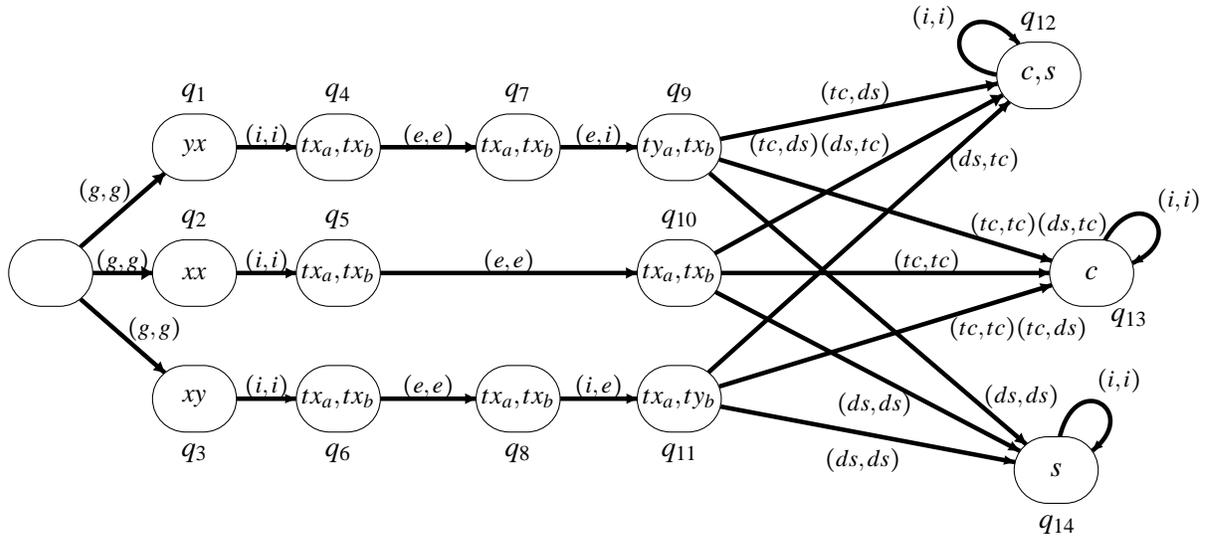

\section{Model-checking \ATLK{}}
\label{sec3}

The model-checking procedure for \ATLK{} builds on model checking
techniques for CTL with knowledge modalities and ATL with complete
information. It works by recursion on the construction of formulas.
Given a formula $\phi$ with a cooperation modality as the main
connective, the procedure involves refining the given arena $\Gamma$
to an arena $\widehat \Gamma$ in which the state space can be
partitioned into states which satisfy $\phi$ and states which do not.

The idea is to have, after the splitting, an equivalence relation $\equiv_A$ on the states of the resulting game arena $\widehat \Gamma$,
such that 
$\widehat {q_1} \equiv_A \widehat {q_2}$ 
iff $\widehat{q_1}$ and $\widehat{q_2}$ are reachable through the same histories, as seen by $A$.

The construction 
of the refined state space
is inspired 
by the usual construction of a game with perfect information for solving 
two-player games with one player having imperfect information, see \cite{chatterjee-imperfect-information2006}.
However the construction is more involved, because, contrary to \cite{chatterjee-imperfect-information2006}
the objectives here may not be observable by the coalition.

\subsection{The state-splitting construction.}
\label{sec:031}
Given a game arena 
$\Gamma=(Ag,Q,(C_a)_{a\in Ag},\delta,Q_0,(Prop_a)_{a\in Ag},\lambda)$ 
and a coalition $A$, we construct a new game arena 
$\widehat {\Gamma_A}=(Ag,\widehat Q, (C_a)_{a \in Ag}, \widehat \delta, 
              \widehat {Q_0}, (Prop_a)_{a\in Ag}, \widehat \lambda)$,
as follows:
\begin{itemize}
\setlength\itemsep{.5ex}
\item $\widehat Q =\big\{(q,S)\mid S\subseteq Q,  q\in S\text{ and for all }
                     s\in S, \lambda_A(s)=\lambda_A(q) \big\}$;
\item $\widehat {Q_0}=\big\{(q_0,S_0) \mid q_0 \in Q_0 \text{ and } 
                     S_0=\{s\in Q_0 \mid \lambda_A(s)=\lambda_A(q_0)\}\big\}$;
\item $\widehat \lambda(q,S) = \lambda(q)$ for all $(q,S) \in \widehat Q$.
\item $(q,S) \sdup{c} (q',S')\in\widehat\delta$ if and only if
      the following properties hold:
  \setlength\parskip{1ex}
  \begin{itemize}
  \setlength\itemsep{.5ex}
  \item $(q,S),(q',S')\in \widehat Q$ and $c\in C$;
  \item $q\sdup{c} q'\in \delta$;
  \item $S'=  out\big(S,c\restr{A},\lambda_A(q')\big)$.
  \end{itemize}
\end{itemize}

The intended equivalence on states is then the following:
$\widehat q \equiv_A \widehat {q'}$ if and only if there exists $S \subseteq Q$ with 
$\widehat q = (q,S)$ and $\widehat {q'} = (q',S)$.

Every run $\rho\in\runsinf(\Gamma)$, $\rho=(q_{i-1}\sdup{c_i} q_i)_{i\geq 1}$, 
has a unique corresponding run $\widehat\rho\in\runsinf(\widehat{\Gamma_A})$,
$\widehat\rho=((q_{i-1},S_{i-1})\sdup{c_i}(q_i,S_i))_{i\geq 1}$. 
This is because $q_0$ unambiguously determines $S_0$ and, 
recursively, $S_{i-1}$ uniquely determines $S_i$, for any 
$i\geq 1$. 
The converse holds  too, that is, to each run $\br \rho=((q_{i-1},S_{i-1})\sdup{c_i}(q_i,S_i))_{i\geq 1}$ in $\runsinf(\widehat {\Gamma_A})$, 
corresponds a unique run $\rho=(q_{i-1}\sdup{c_i} q_i)_{i\geq 1}$ such that $\widehat\rho=\br \rho$.
Furthemore, every strategy for $A$ in $\Gamma$ is also a strategy for $A$ in $\widehat {\Gamma_A}$.
\begin{proposition}\label{P1}
\begin{enumerate}
\item 
\label{P1item1}
If $\rho$ and $\rho'$ are runs in $\Gamma$ of the same length, then
$\rho \sim_A \rho'$ iff $\widehat\rho \sim_A \widehat {\rho'}$.
\item 
\label{P1item2}
If $B\subseteq Ag$, $\sigma\in \Sigma(B,\Gamma) = \Sigma(B,\widehat{\Gamma_A})$, and
$\rho \in \runsinf(\Gamma)$, then $\sigma$ is compatible with $\rho$ iff $\sigma$ is compatible with $\widehat\rho$. 
\item 
\label{P1item3}
If $\rho \in \runsinf(\Gamma)$,
$p\in Prop$ and $i\geq 0$, then
$(\Gamma,\rho,i)\models K_A p$ is equivalent to both $(\widehat\Gamma,\widehat\rho,i)\models K_A p$,
and to $p\in\lambda(s)$ for all $s$ in the second component of $\widehat\rho[i]$.
\item 
\label{P1item4}
If $\rho \in \runsinf(\Gamma)$, $\phi$ is an arbitrary \ATLK{} formula and $i\geq 0$, then 
\[
(\Gamma,\rho,i)\models\phi \text{ iff } (\widehat{\Gamma_A},\widehat\rho,i)\models\phi
\]
\end{enumerate}
\end{proposition}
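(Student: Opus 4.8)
The plan is to first isolate the combinatorial heart of the construction as a lemma proved by induction on $i$: writing $\widehat\rho=((q_{k-1},S_{k-1})\sdup{c_k}(q_k,S_k))_{k\ge 1}$ for the run of $\widehat{\Gamma_A}$ corresponding to $\rho=(q_{k-1}\sdup{c_k}q_k)_{k\ge 1}$, I would show that for every $i\ge 0$
\[
S_i=\big\{\,\rho'[i]\;\big|\;\rho'\in\runsinf(\Gamma),\ \rho'[0..i]\sim_A\rho[0..i]\,\big\}.
\]
For $i=0$ this is the definition of $\widehat{Q_0}$: the length-$0$ prefixes $A$-indistinguishable from $\rho[0..0]$ are exactly the one-state runs starting in some $q\in Q_0$ with $\lambda_A(q)=\lambda_A(q_0)$. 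For the step I would use $S_{i+1}=out(S_i,c_{i+1}\restr{A},\lambda_A(q_{i+1}))$: the inclusion $\supseteq$ follows at once by unfolding $\sim_A$ and $out$, whereas for $\subseteq$ I would take $s'\in out(S_i,c_{i+1}\restr{A},\lambda_A(q_{i+1}))$, pick a witnessing transition $s\sdup{c'}s'\in\delta$ with $s\in S_i$, $c'\restr{A}=c_{i+1}\restr{A}$ and $\lambda_A(s')=\lambda_A(q_{i+1})$, invoke the induction hypothesis to get $\rho'$ with $\rho'[i]=s$ and $\rho'[0..i]\sim_A\rho[0..i]$, append the transition $s\sdup{c'}s'$, and extend arbitrarily afterwards (the arena being non-blocking, so every finite run extends to an infinite one) to a run of $\runsinf(\Gamma)$ realising $s'$. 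By transitivity of $\sim_A$ the lemma also gives that the $S$-component of $\widehat{\rho'}[i]$ is $S_i$ whenever $\rho'[0..i]\sim_A\rho[0..i]$, which is precisely the property that makes $\equiv_A$ behave as intended. I expect this lemma --- and within it the $\subseteq$ direction, where one must produce the realising run and check that it does not leave the observational class --- to be the only genuinely load-bearing step; everything else is transfer along a bijection.

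Parts~\ref{P1item1} and~\ref{P1item2} are then bookkeeping, resting on two observations: $\widehat\lambda(q,S)=\lambda(q)$, hence $\widehat\lambda_B(q,S)=\lambda_B(q)$ for \emph{every} coalition $B$; and every $\widehat\delta$-transition carries the same action tuple as the underlying $\delta$-transition. Thus $\widehat\lambda_B(\widehat\rho[k])=\lambda_B(\rho[k])$ and $act(\widehat\rho,k)=act(\rho,k)$ for all $k$, so unfolding the definition of $\sim_B$ proves Part~\ref{P1item1} --- and the same argument works for an arbitrary coalition $B$, which Part~\ref{P1item4} will use. Since moreover $\Sigma(B,\Gamma)$ and $\Sigma(B,\widehat{\Gamma_A})$ are literally the same set of maps $(2^{Prop_B})^*\to C_B$, the compatibility condition for a fixed $\sigma$ reads identically along $\rho$ and along $\widehat\rho$, which is Part~\ref{P1item2}.

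For Part~\ref{P1item3} I would combine the lemma with the semantics of $K_A$: $(\Gamma,\rho,i)\models K_Ap$ says exactly that $p\in\lambda(\rho'[i])$ for every $\rho'$ with $\rho'[0..i]\sim_A\rho[0..i]$, which by the lemma is ``$p\in\lambda(s)$ for all $s$ in the second component of $\widehat\rho[i]$''; the equivalence with $(\widehat{\Gamma_A},\widehat\rho,i)\models K_Ap$ then follows by pushing this statement through the run bijection $\rho'\mapsto\widehat{\rho'}$ recalled before the proposition, using Part~\ref{P1item1} to match the $\sim_A$-classes and $\widehat\lambda(\widehat{\rho'}[i])=\lambda(\rho'[i])$ to match the labels.

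Finally, Part~\ref{P1item4} is a structural induction on $\phi$, with $B$ now ranging over the coalitions that occur in $\phi$. The cases $p$, $\neg$ and $\wedge$ are immediate (the atomic case uses $\widehat\lambda(q,S)=\lambda(q)$, the others the induction hypothesis). For $K_B\phi$, $\cenxt[B]\phi$, $\ceoal[B]\phi_1\untl\phi_2$ and $\ceoal[B]\phi_1\wkuntl\phi_2$ I would use that $\rho'\mapsto\widehat{\rho'}$ is a position-preserving bijection between $\runsinf(\Gamma)$ and $\runsinf(\widehat{\Gamma_A})$ which, by Parts~\ref{P1item1} and~\ref{P1item2} applied to the relevant coalition $B$, preserves $\sim_B$-equivalence of prefixes and compatibility with any fixed strategy; each of these clauses is a first-order combination of quantifiers over runs (optionally restricted to those compatible with a quantified $\sigma$ and $\sim_B$-related to $\rho[0..i]$), quantifiers over positions $j$ and $k$, and the relations ``$\models\phi_1$'' and ``$\models\phi_2$'' at those positions, so it transfers across the bijection by the induction hypothesis on $\phi_1$ and $\phi_2$. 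This yields $(\Gamma,\rho,i)\models\phi$ iff $(\widehat{\Gamma_A},\widehat\rho,i)\models\phi$.
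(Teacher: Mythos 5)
Your proposal is correct and follows essentially the same route as the paper: parts~\ref{P1item1}--\ref{P1item3} are read off from the construction of $\widehat{\Gamma_A}$ and the run bijection $\rho\mapsto\widehat\rho$, and part~\ref{P1item4} is a structural induction transferring each semantic clause across that bijection using parts~\ref{P1item1} and~\ref{P1item2} (for the coalition occurring in the formula, as you rightly generalize). The only difference is that you make explicit, as an induction on $i$, the characterization of $S_i$ as the set of states reachable along $A$-indistinguishable histories — a fact the paper subsumes under ``follows directly from definition'' — which is a welcome added detail rather than a deviation.
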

\begin{proof}
(\ref{P1item1}), (\ref{P1item2}) and (\ref{P1item3}) follow directly from definition. 
(\ref{P1item4}) is proved by structural induction on $\phi$. For example,
\begin{itemize}
\item if $\phi=K_B \psi$, for some $B\subseteq Ag$, then $(\widehat{\Gamma_A},\widehat\rho,0)\models\phi$ iff $(\widehat{\Gamma_A},\widehat{\rho'},0) \models \psi$ for all $\widehat{\rho'}\in\runsinf(\widehat{\Gamma_A})$ such that $\lambda_B(\widehat{\rho'}[0])=\lambda_B(\widehat\rho[0])$. By the induction hypothesis, this is equivalent to $(\Gamma,\rho',0) \models \psi$ for all $\rho'\in\runsinf(\Gamma)$ such that $\lambda_B(\rho'[0])=\lambda_B(\rho[0])$. The latter is equivalent to $(\Gamma,\rho,0) \models \phi$.

\item if $\phi=\ceoal[B]\psi_1\untl\psi_2$, for some $B\subseteq Ag$, then $(\widehat{\Gamma_A},\widehat\rho,i)\models\phi$ iff there exists a strategy $\sigma\in \Sigma(B,\widehat{\Gamma})$ such that 
for every run $\widehat{\rho'}\in\runsinf(\widehat{\Gamma})$ which is compatible with $\sigma$
      and satisfies $\widehat{\rho'}[0..i] \sim_A \widehat{\rho}[0..i]$ there exists $j\geq i$
      such that $(\widehat{\Gamma},\widehat{\rho'},j) \models \psi_2$ and  
      $(\widehat{\Gamma},\widehat{\rho'},k) \models \psi_1$ for all $k=i,\ldots,j-1$. Let $\rho''\in\runsinf(\Gamma)$ be a run compatible with $\sigma$ such that $\rho''[0..i] \sim_A \rho[0..i]$.  We have that $\widehat{\rho''}[0..i]\sim_A \rho[0..i]\sim_A \widehat{\rho}[0..i]$ and by (\ref{P1item2}), $\widehat{\rho''}$ is compatible with $\sigma$. Consequently, there exists $j\geq i$
      such that $(\widehat{\Gamma},\widehat{\rho''},j) \models \psi_2$ and  
      $(\widehat{\Gamma},\widehat{\rho''},k) \models \psi_1$ for all $k=i,\ldots,j-1$. By the induction hypothesis, we obtain that $(\Gamma,\rho'',j) \models \psi_2$ and  
      $(\Gamma,\rho'',k) \models \psi_1$ for all $k=i,\ldots,j-1$ which implies $(\Gamma,\rho,i)\models\phi$. For the other implication we can proceed in a similar manner.
\end{itemize}
\end{proof}

\begin{remark}
Item (\ref{P1item3}) from Proposition \ref{P1} gives the state partitioning 
procedure for knowledge operators:
we may partition the state space of $\widehat {\Gamma_A}$ as 
$\widehat Q = \widehat Q^{K_A p} \cup \widehat Q^{\neg K_A p}$,
where 
\begin{align}
 \widehat Q^{K_A p} & = \{ (q,S) \in \widehat Q \mid 
           (\forall s\in S)(p \in \lambda(s)= \lambda(q))\} \label{id:q-hat} \\
  \widehat Q^{\neg K_A p} & = Q \setminus  \widehat Q^{K_A p} \label{id:q-neg-hat}
\end{align}

\end{remark}

\begin{example}\label{ex:ga}
The arena $\widehat{\Gamma_{\{Alice,Bob\}}}$ corresponding to $\Gamma$ from Figure \ref{fig:ex} is obtained by replacing each state $q$ with:
\begin{itemize}
        \item $(q,\{q\})$, if $q\not\in\{q_1,q_2,q_3\}$,
        \item $(q,\{q_1,q_2,q_3\})$, otherwise.
\end{itemize}

The states $(q,\{q_1,q_2,q_3\})$ with $q\in\{q_1,q_2,q_3\}$ denote the fact that, from the point of view of Alice and Bob, $q$ is reachable through the same history as the states $q_1$, $q_2$, and $q_3$.

%
\end{example}


\subsection{The state labeling constructions}

\label{sec:4.2}

Our next step is to describe how, given an arena $\Gamma$ and a
coalition $A$, the states $(q,S) \in \widehat Q$ of
$\widehat{\Gamma}_A$ can be labelled with the 
\ATLK{}
formulas which
they satisfy in case the considered formulas have one of 
the
forms $\cenxt p$, 
$\ceoal p_1 \untl p_2$ and $\ceoal p_1 \wkuntl p_2$.

The three cases are different. Formulas of the form $\cenxt p$ are the
simplest to handle. To do formulas of the forms $\ceoal
p_1 \untl p_2$ and $\ceoal p_1 \wkuntl p_2$, we build appropriate tree automata.
 
\paragraph{Case $\cenxt p$:}
We partition the state space of $\widehat {\Gamma_A}$ in 
$\widehat Q^{\cenxt p}$ and 
$\widehat Q^{\neg \cenxt p}$, where 
\begin{align}
\widehat Q^{\cenxt p} & = 
\big\{ (q,S) \in \widehat Q \mid \exists c\in C_A \text{ s.t. }
\forall S' \subseteq Q,\ \forall r \in S,\ \forall r' \in S',\ \forall c' \in C, \notag \\
 & \qquad 
\text{ if } (r,S) \sdup{c'} (r',S')\text{ and } c'\restr{A} = c \text{ then } p \in \widehat\lambda(r')
\big\} \label{id:next}\\
\widehat Q^{\neg \cenxt p} & = \widehat Q \setminus \widehat Q^{\cenxt p}
\end{align}

\paragraph{Case $\ceoal p_1 \untl p_2$:} 
We build a tree automaton 
whose states represent histories which are indistinguishable to
$A$ in a finitary way. A special mechanism  is needed for checking
whether the objective $p_1 \untl p_2$ is satisfied on all paths of an
accepted tree. The main difficulty lies in the fact that the objective 
need not be observable by coalition $A$
because neither $p_1$ nor $p_2$ are required to belong to $Prop_A$.
Hence there can be behaviours $\rho$ and $\rho'$ such that
$\rho'[0..i]\sim_A \rho[0..i]$ and $(\rho,i)$ satisfies $p_1 \untl
p_2$ but $(\rho',i)$ does not.

Therefore, given some group of states $R$ representing some history,
we need to keep track of the subset $R'$ of states in $R$ for which the
obligation $p_1 \untl p_2$ was not yet satisfied on their history.
All the states in $R'$ must be labeled with $p_1$, and we need to find
outgoing transitions in the automaton that ensure the obligation to
have  $p_1 \untl p_2$ on all paths leaving $R'$.  On the other hand, states in
$R\setminus R'$ are assumed to have histories in which $p_1 \untl p_2$
has been `` achieved'' in the past, and, therefore, are ``free'' from
the obligation to fulfill $p_1 \untl p_2$.

Let $(q,S)\in\widehat{Q}$. Formally, the tree automaton is 
$\breve \AAA_{(q,S)} = (\breve Q, C_A, \breve \delta, \breve {Q}_0, \breve \FFF)$ 
where:

\begin{itemize}
\item $\breve Q$ contains $\bot$, assumed to signal failure to fulfil $p_1\untl p_2$, 
and all the sets of pairs $(R_1,R_2)$ with:
  \begin{itemize}
  \item $R_1 \subseteq R_2 \subseteq Q$,
  \item $\forall r_1,r_2 \in R_2, \lambda_A(r_1) = \lambda_A(r_2)$,
    and $\forall r_1\in R_1, p_2\not\in \lambda(r_1)\wedge p_1\in \lambda(r_1)$,
  \end{itemize}
\item The initial state $\breve Q_0$ is defined by:
  \begin{enumerate}
  \item if there exists $ s\in S$ for which $ \lambda(s)\cap \{p_1,p_2\}=\emptyset$ then $\breve Q_0 =\bot$.
  \item otherwise, we denote $Q[p_2] = \big\{ q \in Q \mid p_2 \in \lambda(q) \big\}$ and we put
    $Q_0=\{(S\setminus Q[p_2],S) \}$.
  \end{enumerate}
\item $\breve \delta : \breve Q \times C_A \sd 2^{\breve Q} \setminus \emptyset$
 is defined as follows: first, for any $c_A\in C_A$, $\delta((\bot,c_A))=\{\bot\}$. 
Then, for each $(R_1,R_2)\in \breve Q\setminus \{\bot\}$ and $c_A \in C_A$, two situations may occur:
\begin{enumerate}
\item If there exist $r_1 \in R_1$, $(r,R)\in \widehat Q$ and $c \in C$ such that 
$(r_1,R_2)\sdup{c}(r,R) \in \widehat\delta$, $c\restr{A} = c_A $
and $\{p_1,p_2\} \cap \lambda(r)=\emptyset$, then 
$\delta\big((R_1,R_2),c_A\big) = \{\bot\}$.
\item Otherwise, 
\begin{align*}
\!\!\!\!\!\!\!\!
\!\!\!\!\!\!\!\!
\!\!\!\!\!\!\!\!
\delta\big((R_1,R_2),c_A\big) & = \big\{(out(R_1,c_A,Z) \setminus Q[p_2],out(R_2,c_A,Z)) \mid Z\subseteq Prop_A, 
 out(R_2,c_A,Z)\neq \emptyset  \big\} 
\end{align*}
That is, 
each transition from $(R_1,R_2)$ labeled with $c_A$ must embody sets of states representing 
all the variants of observations which occur as outcomes of the action tuple $c_A$ from $R_2$,
paired with the subset of states in which the $p_1 \untl p_2$ obligation is not fulfiled.
\end{enumerate}

\item The acceptance condition is 
\[
\breve \FFF = \big\{ \RRR \mid \RRR \subseteq \breve Q \text{ with } 
(\emptyset,R) \in \RRR, \text{ for some  } R \subseteq Q \big\}.
\]
That is, $\breve \AAA_{\widehat q}$ accepts only trees 
in which each path reaches some node containing the empty set as first state label.
\end{itemize}

Note that, in a pair $(R_1,R_2)$ representing an element in $\breve Q$,
the first component $R_1$ represents the subset of states  of $R_2$ whose history has not yet accomplished $p_1 \untl p_2$.
Hence, a tree node with label $(\emptyset, R)$ 
signals that the obligation $p_1 \untl p_2$ is accomplished for all histories ending in $R$.

Note also that, whenever the successors of $(R_1,R_2)$ labeled $c_A$ do not contain a state labeled by $\bot$,
we have that, for any $Z \subseteq Prop_A$ and any $s \in out(R_2,c_A,Z)$, 
$p_1 \in \lambda(s)$ or $p_2 \in \lambda(s)$.

We may then prove the following result:
\begin{proposition}\label{PU}
For any run $\widehat \rho \in \runsinf(\widehat{\Gamma_A})$ and position $i$ on the run for which $\rho[i] = \widehat q = (q,S)$,
\[
(\widehat{\Gamma_A},\rho,i) \models \ceoal p_1 \untl p_2 \text{ if and only if } 
L(\breve\AAA_{\widehat q}) \neq \emptyset
\]
\end{proposition}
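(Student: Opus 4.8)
The plan is to establish the biconditional by translating, in both directions, between winning strategies for coalition $A$ in the game semantics of $\ceoal p_1\untl p_2$ at $(\widehat q,S)$ and accepting runs of $\breve\AAA_{\widehat q}$. The key conceptual point is that a strategy $\sigma\in\Sigma(A,\widehat{\Gamma_A})$ depends only on the $A$-history $\lambda_A(\widehat\rho[0])\cdots\lambda_A(\widehat\rho[i])$, and that along any such history the set of \emph{actual} states reachable by runs $\widehat{\rho'}$ indistinguishable to $A$ from $\widehat\rho$ (and compatible with $\sigma$) is exactly computed by the $out(\cdot,c_A,Z)$ operator — this is how the second component $S$ of $\widehat Q$ was designed (via the state-splitting construction), and how the second component $R_2$ of automaton states evolves. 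So each node of a tree run of $\breve\AAA_{\widehat q}$, labelled $(R_1,R_2)$, records: $R_2$ = the reachable states after this $A$-observable prefix, and $R_1\subseteq R_2$ = the subset of those whose run has not yet witnessed $p_2$ (with $p_1$ holding all along, enforced by the constraint in $\breve Q$ and by the $\{\bot\}$-transition in case~1 of $\breve\delta$). The branching of the tree over $Z\subseteq Prop_A$ enumerates exactly the possible next $A$-observations, which is what a strategy must handle uniformly.

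First I would fix notation: given $\widehat\rho$ with $\widehat\rho[i]=(q,S)$, and a strategy $\sigma$, define for each finite continuation of the $A$-history the sets $R_2(h)=\{$states of runs $\widehat{\rho'}$ compatible with $\sigma$, with $\widehat{\rho'}[0..i]\sim_A\widehat\rho[0..i]$, whose $A$-observations from position $i$ spell $h\}$, and $R_1(h)\subseteq R_2(h)$ the subset where $p_1\untl p_2$ is not yet fulfilled. I would check that $R_2(\eps)=S$, $R_1(\eps)=S\setminus Q[p_2]$ (matching $\breve Q_0$), and that $(R_1(h\cdot Z),R_2(h\cdot Z))$ is obtained from $(R_1(h),R_2(h))$ by the case~2 clause of $\breve\delta$ with $c_A=\sigma$ applied to the history, \emph{unless} some reachable successor satisfies neither $p_1$ nor $p_2$, in which case $\sigma$ fails and the automaton goes to $\bot$ (case~1). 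Thus a winning $\sigma$ yields a tree labelled by the $(R_1,R_2)$'s, and $\sigma$ winning means every path reaches an $h$ with $R_1(h)=\emptyset$ — i.e.\ a node $(\emptyset,R)$ — so the tree is accepted. Conversely, from an accepting run $t$ of $\breve\AAA_{\widehat q}$ I would read off a strategy: at each $(R_1,R_2)$-node reached along the $A$-history $h$, let $\sigma(\lambda_A(\widehat\rho[0])\cdots\lambda_A(\widehat\rho[i])\cdot h)$ be the action $c_A\in C_A$ labelling the chosen outgoing edge in $t$; since $t$ avoids $\bot$, case~1 never triggers, so every reachable successor along every $\sim_A$-equivalent compatible run satisfies $p_1\vee p_2$, and the acceptance condition guarantees $p_2$ is eventually reached on every branch, with $p_1$ holding in between by the $\breve Q$-invariant. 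This gives $(\widehat{\Gamma_A},\rho,i)\models\ceoal p_1\untl p_2$. Here I would invoke Proposition~\ref{P1}(\ref{P1item1})--(\ref{P1item2}) to move freely between $\Gamma$- and $\widehat{\Gamma_A}$-runs and to know that $\sigma$-compatibility is preserved, and Proposition~\ref{P1}(\ref{P1item4}) if $p_1,p_2$ need re-interpretation, though here they are atomic so only $\widehat\lambda=\lambda$ is used.

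The main obstacle I expect is the bookkeeping subtlety that $p_1\untl p_2$ is \emph{not} $A$-observable: two runs $\widehat{\rho'},\widehat{\rho''}$ with the same $A$-history may fulfil the objective at different positions, so the automaton cannot simply ``declare success'' at one time for the whole equivalence class. This is exactly why the state carries the pair $(R_1,R_2)$ rather than a single set, and why acceptance asks for an $(\emptyset,R)$-node on \emph{each} path rather than a uniform stopping time. I need to argue carefully that dropping a state from $R_1$ (because its run has just seen $p_2$) while keeping it in $R_2$ is sound — it must remain in $R_2$ because it still constrains which actions $\sigma$ may later play for the other, still-obligated, states sharing its $A$-history — and that once $R_1$ becomes empty along a path it stays empty (since $out(\emptyset,c_A,Z)=\emptyset$), so reaching $(\emptyset,R)$ is genuinely a permanent discharge of the obligation along that branch. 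A secondary technical point is the ``full tree'' requirement for automaton runs: I must check that the case~2 clause, which ranges $Z$ over all of $\mathit{Prop}_A$ with $out(R_2,c_A,Z)\neq\emptyset$, produces exactly the set of successors a legal run needs (distinct $Q$-labels, nonempty, matching $\breve\delta$), and that finitely many states mean $\breve Q$ is finite so Theorem~1 applies to decide $L(\breve\AAA_{\widehat q})\neq\emptyset$.
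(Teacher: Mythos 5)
Your proposal is correct and follows essentially the same route as the paper's proof: build the tree of $(R_1,R_2)$-labelled nodes from a winning strategy (with $R_2$ tracking $\sim_A$-reachable states and $R_1$ the still-obligated ones) for one direction, and read a strategy off an accepting tree and propagate the $R_1/R_2$ invariant by induction for the other. The only presentational difference is that you argue acceptance of the constructed tree directly (extracting a violating run if some path never reaches an $(\emptyset,R)$-node), whereas the paper phrases the same step contrapositively by assuming $L(\breve\AAA_{\widehat q})=\emptyset$ and building the counterexample run from a rejecting path; the bookkeeping subtleties you flag (non-observable objective, permanence of $R_1=\emptyset$, fullness of the run) are exactly those the paper's construction handles.
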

\begin{proof}
($\Rightarrow$) Suppose that $(\widehat {\Gamma_A}, \rho, i) \models
\ceoal p_1 \untl p_2 $. Then, there exists $\sigma \in
\Sigma(A,\widehat {\Gamma_A})$ such that for any $\rho' \in
\runsinf(\widehat {\Gamma_A})$ compatible with $\sigma$ and for
which $\rho'[0..i] \sim_A \rho[0..i]$ we have $(\widehat
{\Gamma_A},\rho',i) \models p_1 \untl p_2$.

Let $t:\Nset^* \sdpart \breve Q\times C_A$ be a tree constructed recursively as follows:
\begin{itemize}
\item The root of the tree is $t(\epsilon)=((S\setminus Q[p_2],S),c)\in \breve Q_0$ 
where $c=\sigma(\lambda_A(\rho[0])\ldots \lambda_A(\rho[i]))$. 
Note that, by  hypothesis, $\bot\not\in \breve Q_0$.
\item Suppose we have build the tree up to level $j\geq 0$.  Let
  $t(x)=((R_1,R_2),c_A)$ be a node on the $j$th level, where $x\in
  \supp(t)\cap \Nset^j$. 
Consider some order on the set 
$\breve\delta(t(x)) = \breve\delta((R_1,R_2),c_A) = \{ (R_1^1,R_2^1),\ldots,(R_1^k,R_2^k)\}$ for some $k\geq 1$.
The successors of $t(x)$ will be labeled with the elements of this set, 
each one in pair with an action symbol in $C_A$ -- action symbol which is chosen as follows:

Denote $(x_p)_{1\leq p\leq j}$ the initialized path in $t$ which ends in $x$.
For each $1\leq l\leq k$, put 
\[
c_l = \sigma(\lambda_A(t(x_1)\restr{\breve Q})\ldots \lambda_A(t(x_k)\restr{\breve Q})\lambda_A(R_2^l)).
\]
Then, for all $1\leq l\leq k$ we put $t(xl)=((R_1^l,R_2^l),c_l)$.

\end{itemize}

Suppose that $L(\breve \AAA_{\widehat q})=\emptyset$. This implies
that $t$ is not an accepting run in $\AAA$. Consequently, there exists an
infinite path $\pi=(x_k)_{k\geq 0}$, where $x_k\in\Nset^k$, in $t$
which does not satisfy any acceptance condition in $\breve\FFF$. We
have two cases:
\begin{enumerate}
\item $\pi$ contains states different from $(\emptyset,R)$, for
  any $R\subseteq Q$, it reaches state $\bot$ and then  loops in this state
  forever, or
\item $\pi$ contains a cycle passing through states which are all different from
  $(\emptyset,R)$ or $\bot$, for any $R\subseteq Q$.
\end{enumerate}

For the first case, let $\alpha$ be the length of the maximal prefix
of $\pi$ containing only states different from $\bot$. 
Let $t(x_k)=((R_1^k,R_2^k),c_A^k)$, for any $0\leq k< \alpha$. 
By the definition of $t$, we have that
$\sigma(\lambda_A(R_2^0)\ldots \lambda_A(R_2^k))=c_A^k$, for any
$0\leq k<\alpha$.

Let $\rho'=\big((q_{k-1},R_{k-1})\sdup{c_k}(q_k,R_k)\big)_{k\geq 1}$ be an infinite run in
$\widehat {\Gamma_A}$ such that:
\begin{itemize}
\item $\rho'[0..i] \sim_A \rho[0..i]$ and $q_i\in R_1^0$,
\item $q_{i+k}\in R_1^k$ and $R_{i+k}=R_2^k$, for all $\alpha> k\geq 1$. 
\item note that, by definition of $\pi$, $\breve
  \delta((R_1^{\alpha-1},R_2^{\alpha-1}),c_A^{\alpha-1})=\bot$.
  We define $(q_{i+\alpha},R_{i+\alpha})\in \widehat Q$ such that
  $(q_{i+\alpha-1},R_2^{\alpha-1})\sdup{c}(q_{i+\alpha},R_{i+\alpha})
  \in \widehat\delta$, for some $c\in C$ with
  $c\restr{A}=c_A^{\alpha-1}$, and $\{p_1,p_2\} \cap
  \lambda(q_{i+\alpha})=\emptyset$.
\end{itemize}
By definition of $t$, this run exists and it is compatible with $\sigma$.
Also, starting with position $i$, $\rho'$ contains a sequence of
states labeled by $p_1$ but not by $p_2$ followed by a state which is
not labeled by $p_1$ or $p_2$.  Consequently, $(\widehat
{\Gamma_A},\rho',i) \not \models p_1 \untl p_2$ which contradicts the
hypothesis.

Similarly, for the second case above, we can construct a run $\rho'$
in $\widehat {\Gamma_A}$ compatible with $\sigma$ such that
$\rho'[0..i] \sim_A \rho[0..i]$ and $(\widehat {\Gamma_A},\rho',i)
\not\models \dmnd p_2$. Consequently, $(\widehat {\Gamma_A},\rho',i)
\not\models p_1 \untl p_2$ which contradicts the hypothesis.

($\Leftarrow$)
Assume that $t$ is a tree accepted by $\breve{\AAA_{\widehat q}}$.
We will construct inductively a strategy $\sigma$ which 
is compatible with $\rho[0..i]$ and 
satisfies the required conditions for witnessing 
that $(\widehat{\Gamma_A},\widehat\rho,i) \models \ceoal p_1 \untl p_2$.

Suppose that the run $\rho$ is $\rho = (\widehat {q_{j-1}} \sdup{c_j} \widehat{q_j})_{j\geq 1}$.
First, we may define $\sigma$ for sequences of elements in $2^{Prop_A}$ of length at most $i$:
for any $A$-history of length less than or equal to $i$, 
$w \in \big(2^{Prop_A}\big)^*$, $|w| = j$ with $1\leq j\leq i$, we put
\[
\sigma(w) = 
\begin{cases}
c_j\restr{A} & \text{ if  } w = \widehat{\lambda_A}(\widehat{q_0})\ldots \widehat{\lambda_A}(\widehat{q_{j-1}}) \\
\text{arbitrary} & \text{ otherwise}
\end{cases}
\]
For defining $\sigma$ on sequences of length greater than $i$, let's 
denote first $w_{\widehat \rho} = \widehat{\lambda_A}(\widehat{q_0})\ldots \widehat{\lambda_A}(\widehat{q_{i-1}})$.
Also, given a sequence of subsets of $Prop_A$, 
$z = (Z_1\cdot \ldots \cdot Z_k) \in (2^{Prop_A})^*$ and a node $x \in \supp(t)$, 
we say that $z$ \emph{labels a path from $\eps$ to $x$ in $t$} if 
the $A$-history along the path from $\eps$ to $x$ in $t$ is exactly $z$, that is,
\[
\forall y \preceq x, \forall 0\leq j\leq |x|, \text{ if } |y| = j \text{ then } \widehat{\lambda_A}\big(t(y)\restr{1}\big) = Z_{j+1}.
\]

Then, for all $k\geq 1$ we put:
\[
\sigma(w_{\widehat \rho}\, Z_1\ldots Z_k) = 
\begin{cases}
t(x)\restr{2} & \text{if }  (Z_1,\ldots,Z_k) \text{ labels a path from $\eps$ to $x$ in } t \\
\text{arbitrary} & \text{otherwise}
\end{cases}
\]

To prove that $\sigma$ is a strategy that witnesses for $(\widehat {\Gamma_{A}}, \widehat \rho, i) \models \ceoal p_1 \untl p_2$,
take some run $\rho'$ compatible with $\sigma$ and for which 
$\rho'[0..i]\sim_A \rho[0..i]$.
We may prove that, if we denote the run as $\rho'= (\widehat {q'_{j-1}} \sdup {c'_j} \widehat {q'_j})_{j\geq 1}$,
with $\widehat{q'_j} = (r_j,S_j) \in \widehat Q$, and we also denote 
$Z_j = \widehat {\lambda_A}(\widehat{q'_j})$, then:
\begin{itemize}
\item there exists a path $(x_{j-i})_{j\geq i}$ in $t$ with $t(x_{j-i})\restr{1} = (R_1^j,S_{j})$ and $t(x_0)\restr{1} = (R_1^0,S_i)$, for some $R_1^k\subseteq Q$, $0\leq k$.
\item for all $j \geq i+1$, $c_j'\restr{A} = \sigma(Z_0\ldots Z_{j-1}) = t(x_{j-i-1})\restr{2}$.
\end{itemize}
This property follows by induction on $j$, and ends the proof of our theorem.
\end{proof}

\paragraph{Case $\ceoal p_1 \wkuntl p_2$:} 
The construction is almost entirely the same as for the previous case, the only difference being the accepting condition.
For this case, the condition from the until case is relaxed: 
any path of an accepting tree may still only have labels of the type $(R_1,R_2)$
denoting the fact that all the runs that are simulated by the path and lead to 
a member of $R_1$ are only labeled with $p_1$.
But we no longer require that, on each path, 
a label of the type $(\emptyset,R)$ occurs. 
This is due to the fact that $p_1 \wkuntl p_2$ 
does not incorporate the obligation to reach a point where $p_2$ holds,
runs on which $p_1$ holds forever are also acceptable.

So, formally, the construction for $\ceoal p_1 \wkuntl p_2$ is the following:
$\widetilde \AAA_{(q,S)} = (\breve Q, C_A, \breve \delta, \breve {Q}_0, \widetilde \FFF)$ 
where $\breve Q, \breve Q_0$ and $\breve \delta$ are the same as in the construction for 
$\ceoal p_1 \wkuntl p_2 $, while the acceptance condition is the following:
\[
\widetilde \FFF = \big\{ \RRR \mid \RRR \subseteq \breve Q \big\}.
\]


The following result can be proved similarly to Proposition \ref{PU}.


\begin{proposition}\label{PW}
For any run $\widehat \rho \in \runsinf(\widehat{\Gamma_A})$ and position $i$ on the run for which $\rho[i] = \widehat q = (q,S)$,
\[
(\widehat{\Gamma_A},\rho,i) \models \ceoal p_1 \wkuntl p_2 \text{ if and only if } 
L(\widetilde\AAA_{\widehat q}) \neq \emptyset
\]
\end{proposition}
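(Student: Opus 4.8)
The plan is to mirror the proof of Proposition~\ref{PU} line by line, making only the adjustments forced by the relaxed acceptance condition $\widetilde\FFF$, which contains \emph{every} subset $\RRR\subseteq\breve Q$. The key observation is that the automaton $\widetilde\AAA_{(q,S)}$ has the same states, same initial state(s), and same transition relation as $\breve\AAA_{(q,S)}$; in particular, once $\bot$ is reached it loops on $\bot$ forever, and by the same computation as before a path that stays away from $\bot$ simulates a family of runs along which every visited state satisfies $p_1\lor p_2$, with the first components $R_1$ tracking exactly those simulated runs on which $p_2$ has not yet been seen. So the only thing that changes is the \emph{reason} a tree is accepted: for $\widetilde\FFF$ a tree is accepted iff no path ever reaches $\bot$ (since any $\bot$-free set of states occurring on a path trivially lies in $\widetilde\FFF$, while a path through $\bot$ has $\bot$ in its occurrence set, and $\bot$ occurring is harmless only if the whole occurrence set is still in $\widetilde\FFF$ --- which it is; so more care is needed here, see below).

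For the ($\Rightarrow$) direction I would assume $(\widehat{\Gamma_A},\rho,i)\models\ceoal p_1\wkuntl p_2$, take the witnessing strategy $\sigma$, and build the same tree $t$ from $\sigma$ as in Proposition~\ref{PU}. Suppose $L(\widetilde\AAA_{\widehat q})=\emptyset$, so $t$ is not accepting. Since \emph{every} subset of $\breve Q$ is in $\widetilde\FFF$, the only way a path of $t$ can fail the acceptance condition is for that path to reach $\bot$. Reaching $\bot$ happens only via case~1 of $\breve\delta$, i.e.\ because from some $(R_1,R_2)$ under $c_A$ there is a transition in $\widehat\delta$ into a state $(r,R)$ with $\{p_1,p_2\}\cap\lambda(r)=\emptyset$ and $r$ reachable from a state $r_1\in R_1$. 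As in the first case of the earlier proof, this yields a run $\rho'$ compatible with $\sigma$, with $\rho'[0..i]\sim_A\rho[0..i]$, that from position $i$ exhibits a (possibly empty) stretch of states labelled $p_1\land\neg p_2$ followed by a state labelled neither $p_1$ nor $p_2$, \emph{before} any state labelled $p_2$. That run satisfies neither disjunct of the $p_1\wkuntl p_2$ semantics at position $i$ --- it is not the case that $p_2$ eventually holds with $p_1$ holding meanwhile, nor that $p_1$ holds forever --- contradicting that $\sigma$ witnesses $\ceoal p_1\wkuntl p_2$. Note that, unlike the until case, we do \emph{not} need the ``second case'' argument about cycles avoiding $(\emptyset,R)$: such cycles are now perfectly acceptable.

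For ($\Leftarrow$) I would take an accepted tree $t$ and define the strategy $\sigma$ by exactly the same recipe as in Proposition~\ref{PU}: $\sigma$ replays the actions of $\rho$ up to position $i$, and beyond $i$ it reads off $t(x)\restr{2}$ along the unique path whose $A$-history matches the observation sequence. Given any $\rho'$ compatible with $\sigma$ with $\rho'[0..i]\sim_A\rho[0..i]$, the same induction as before produces a path $(x_{j-i})_{j\ge i}$ in $t$ with $t(x_{j-i})\restr{1}=(R_1^j,S_j)$ and $r_j\in R_1^j$ as long as $r_j\notin Q[p_2]$. Because $t$ is accepted and no path reaches $\bot$ (if some path reached $\bot$, the transition into $\bot$ would have been forced by case~1, but by construction $t$'s transitions follow $\breve\delta$; and $\bot$ only loops, so an accepting tree of $\widetilde\AAA$ can still contain $\bot$ --- here I must double-check: a path through $\bot$ has occurrence set containing $\bot$, which is still in $\widetilde\FFF$, so $\bot$-paths are \emph{not} excluded). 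This is the one genuine subtlety, and the main obstacle: I must verify that the construction of $\breve Q_0$ and $\breve\delta$ together guarantee that a state reaching $\bot$ corresponds to a genuine violation, \emph{and} that whenever $(\widehat{\Gamma_A},\rho,i)\models\ceoal p_1\wkuntl p_2$ the tree $t$ built from the witness never creates a $\bot$-successor --- which follows because case~1 of $\breve\delta$ fires only when some simulated run is forced into a state labelled neither $p_1$ nor $p_2$, which $\sigma$ prevents. Once this is pinned down, for every simulated $\rho'$ either some $R_1^j$ becomes ``$p_2$ was seen at step $j$'' (giving disjunct~1 of the semantics, with all earlier states labelled $p_1$ since they were in $R_1^k$), or $r_j\in R_1^j$ for all $j\ge i$, whence $p_1\in\lambda(r_j)$ for all $j\ge i$ (disjunct~2). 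Hence $(\widehat{\Gamma_A},\rho,i)\models\ceoal p_1\wkuntl p_2$, completing the proof. $\dashv$
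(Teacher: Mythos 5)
Your proposal is correct and follows exactly the route the paper intends: the paper gives no separate argument for Proposition~\ref{PW}, saying only that it ``can be proved similarly to Proposition~\ref{PU}'', and your adaptation (drop the cycle case in the left-to-right direction; in the converse direction read off from a $\bot$-free accepted tree, for each simulated run, either an eventual $p_2$ with $p_1$ meanwhile or $p_1$ forever) is precisely that intended adaptation. The obstacle you flag is real but is a defect of the paper's formal definition of $\widetilde\FFF$ rather than of your argument: as written $\widetilde\FFF$ contains \emph{every} subset of $\breve Q$, so even trees passing through $\bot$ would be accepted and the equivalence would fail trivially, whereas the surrounding prose makes clear the intended condition is that accepting paths carry only labels of the form $(R_1,R_2)$ (occurrence sets avoiding $\bot$), and under that reading your proof goes through.
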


\setlength{\unitlength}{1.6pt}

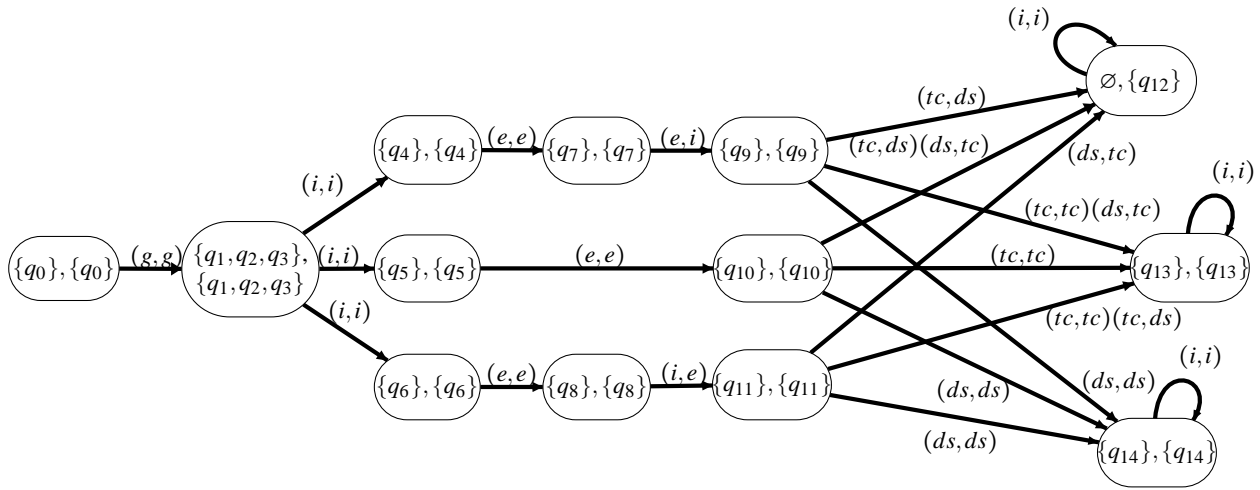
\begin{figure}
\hspace{-8mm}
\hspace{-1cm}
\begin{picture}(327,134)(0,-154)
\node[NLangle=0.0,Nw=26.11,Nh=15.17,Nmr=7.58](n1)(36.87,-95.6){{\footnotesize $\{q_0\},\{q_0\}$}}

\node[NLdist=3.5,Nw=32.46,Nh=22.58,Nmr=11.29](n12)(80.96,-95.78){{\footnotesize $\{q_1,q_2,q_3\}$,}}
\nodelabel[NLdist=-3.5](n12){{\footnotesize $\{q_1,q_2,q_3\}$}}

\drawedge[linewidth=1.0,ELdist=0.0,AHangle=32.62,AHLength=2.97](n1,n12){{\footnotesize $(g,g)$}}


\node[NLangle=0.0,Nw=27.52,Nh=15.87,Nmr=7.93](n14)(203.73,-67.73){{\footnotesize $\{q_9\},\{q_9\}$}}

\node[NLangle=0.0,Nw=28.11,Nh=15.87,Nmr=7.93](n15)(204.44,-95.6){{\footnotesize $\{q_{10}\},\{q_{10}\}$}}

\node[NLangle=0.0,Nw=28.11,Nh=16.58,Nmr=8.29](n16)(204.26,-123.12){{\footnotesize $\{q_{11}\},\{q_{11}\}$}}

\node[NLangle=0.0,Nw=26.11,Nh=16.58,Nmr=8.29](n17)(291.57,-51.15){{\footnotesize $\emptyset,\{q_{12}\}$}}

\node[NLangle=0.0,Nw=28.11,Nh=16.23,Nmr=8.03](n18)(303.04,-95.43){{\footnotesize $\{q_{13}\},\{q_{13}\}$}}

\node[NLangle=0.0,Nw=28.11,Nh=16.23,Nmr=8.11](n19)(295.27,-139.17){{\footnotesize $\{q_{14}\},\{q_{14}\}$}}

\drawedge[linewidth=1.0,ELdist=0.0,AHangle=27.47,AHLength=2.82](n14,n17){{\footnotesize $(tc,ds)$}}

\drawedge[linewidth=1.0,ELpos=75,ELdist=0.0,AHangle=26.57,AHLength=2.68](n14,n18){{\footnotesize $(tc,tc)(ds,tc)$}}

\drawedge[linewidth=1.0,ELpos=85,ELdist=0.0,AHangle=27.03,AHLength=2.75](n14,n19){{\footnotesize $(ds,ds)$}}

\drawedge[linewidth=1.0,ELpos=60,ELdist=0.0,AHangle=26.57,AHLength=2.57](n15,n18){{\footnotesize $(tc,tc)$}}

\drawedge[linewidth=1.0,ELpos=55,ELdist=-12.0,AHangle=23.86,AHLength=2.84](n15,n19){{\footnotesize $(ds,ds)$}}

\drawedge[linewidth=1.0,ELpos=84,ELdist=-15.0,AHangle=27.05,AHLength=2.64](n16,n17){{\footnotesize $(ds,tc)$}}

\drawedge[linewidth=1.0,ELpos=80,ELdist=-14.0,AHangle=25.2,AHLength=2.82](n16,n18){{\footnotesize $(tc,tc) (tc,ds)$}}

\drawedge[linewidth=1.0,ELdist=-10.0,AHangle=22.78,AHLength=2.71](n16,n19){{\footnotesize $(ds,ds)$}}

\drawedge[linewidth=1.0,ELpos=45,ELdist=0.0,AHangle=30.47,AHLength=2.96](n15,n17){{\footnotesize $(tc,ds)(ds,tc)$}}

\node[NLangle=0.0,Nw=25.4,Nh=16.23,Nmr=8.03](n29)(122.94,-67.56){{\footnotesize $\{q_4\},\{q_4\}$}}

\node[NLangle=0.0,Nw=25.05,Nh=15.52,Nmr=7.76](n30)(122.77,-123.65){{\footnotesize $\{q_6\},\{q_6\}$}}

\node[NLangle=0.0,Nw=25.05,Nh=15.87,Nmr=7.93](n33)(122.77,-95.6){{\footnotesize $\{q_5\},\{q_5\}$}}

\drawedge[linewidth=1.0,ELdist=0.0,AHangle=27.05,AHLength=2.64](n12,n33){{\footnotesize $(i,i)$}}


\node[NLangle=0.0,Nw=25.4,Nh=15.87,Nmr=7.93](n35)(162.81,-67.73){{\footnotesize $\{q_7\},\{q_7\}$}}

\node[NLangle=0.0,Nw=25.4,Nh=15.52,Nmr=7.76](n36)(162.81,-123.65){{\footnotesize $\{q_8\},\{q_8\}$}}

\drawedge[linewidth=1.0,ELdist=0.0,AHangle=28.95,AHLength=2.69](n29,n35){{\footnotesize $(e,e)$}}

\drawedge[linewidth=1.0,ELdist=0.0,AHangle=26.57,AHLength=2.8](n33,n15){{\footnotesize $(e,e)$}}

\drawedge[linewidth=1.0,ELdist=0.0,AHangle=26.57,AHLength=2.68](n30,n36){{\footnotesize $(e,e)$}}

\drawedge[linewidth=1.0,ELdist=0.0,AHangle=25.64,AHLength=2.77](n35,n14){{\footnotesize $(e,i)$}}

\drawedge[linewidth=1.0,ELdist=0.0,AHangle=25.6,AHLength=2.66](n36,n16){{\footnotesize $(i,e)$}}

\drawedge[linewidth=1.0,AHangle=27.01,AHLength=2.86](n12,n29){{\footnotesize $(i,i)$}}

\drawedge[linewidth=1.0,ELdist=-0.5,AHangle=25.64,AHLength=2.77](n12,n30){{\footnotesize $(i,i)$}}



\drawloop[linewidth=1.0,AHangle=23.5,AHLength=2.51,loopdiam=10.49,loopangle=151.14](n17){{\footnotesize $(i,i)$}}

\drawloop[linewidth=1.0,AHangle=23.5,AHLength=2.51,loopdiam=9.48,loopangle=65.46](n18){{\footnotesize $(i,i)$}}

\drawloop[linewidth=1.0,AHangle=23.5,AHLength=2.51,loopdiam=9.48,loopangle=65.46](n19){{\footnotesize $(i,i)$}}
\end{picture}
\caption{A tree automaton for the game arena in Figure \ref{fig:ex}}
\label{fig:ta}
\end{figure}

\begin{example}
For our running example,
the tree automaton constructed from the arena $\widehat{\Gamma_{Alice,Bob}}$ 
(given in Example \ref{ex:ga}), 
for the state $(q_0,\{q_0\})$ and the formula $\phi_1= \cefff[\{Alice,Bob\}] (c\wedge s)$ 
is pictured in Figure \ref{fig:ta}. 
Note that it accepts an infinite tree such that all its paths contain the state $(\emptyset,\{q_{12}\})$ 
but never reach $\bot$. Moreover, this tree defines a strategy for the coalition $\{Alice,Bob\}$ to reach the goal $c\wedge s$.
\end{example}

\subsection{The model-checking algorithm}

Our algorithm for the
model-checking problem for \ATLK{} works by structural induction on
the formula $\phi$ to be model-checked. The input of the algorithm is
a game arena $\Gamma=(Q,C,\delta,Q_0,Prop,\lambda)$ and an enumeration
$\Phi = \{\phi_1,\ldots,\phi_n\}$ of the subformulas of $\phi$, in
which $\phi=\phi_n$ and $\phi_i$ is a subformula of $\phi_j$ only if
$i<j$.  The algorithm determines whether $\phi$ holds at all the
initial states of $\Gamma$.  It works by constructing a sequence of
arenas $\Gamma_k = (Q_k,C,\delta_k,Q_0^k,Prop_k,\lambda_k)$,
$k=0,\ldots,n,$ with $\Gamma_0=\Gamma$. The formula $\phi$ is assumed
to be written in terms of the agents from $Ag$ and the atomic
propositions from $Prop=\bigcup_{a\in Ag}Prop_a$ of $\Gamma$. The atomic
propositions of $\Gamma_1,\ldots,\Gamma_n$ include those of $\Gamma$
and $n$ fresh atomic propositions $p_{\phi_k}$, $k=1,\ldots,n$, which
represent the labelling of the states of these arenas by the
corresponding formulas $\phi_k$. For any $1\leq k\leq n$, upon step
$k$ the algorithm constructs $\Gamma_k$ from $\Gamma_{k-1}$ and
calculates the labelling of its states with formula $\phi_k$.
$Prop_k=Prop\cup\Phi_k$ where $\Phi_k$ denotes $\{p_{\phi_1},\ldots,p_{\phi_k}\}$, $k=0,\ldots,n$. 
The state labelling function $\lambda_k$ is defined so that equivalence 
$p_{\phi_k}\Leftrightarrow\phi_k$ is valid in $\Gamma_{k}$. 
Therefore, we define the formula
$\chi_k=\phi_k[\phi_{k-1}/p_{\phi_{k-1}}],\ldots,[\phi_1/p_{\phi_1}]$
which has at most one connective of the form $\cenxt$,
$\ceoal\untl$, $\ceoal \wkuntl$ or $K_A$. The algorithm computes the states that should
be labeled by $p_{\phi_k}$ using the formula $\chi_k$ which is
equivalent to $\phi_k$.
The fresh propositions $p_{\phi_1},\ldots,p_{\phi_n}$ are not assumed
to be observable by any particular agent. Therefore the requirements
$Prop_k=\bigcup\limits_{a\in Ag}Prop_{a,k}$ on arenas are not met by
$\Gamma_1,\ldots,\Gamma_n$, but this is of no consequence.

Let us note the need to switch, at each step, from 
analyzing $\Gamma_k$ to analyzing $\Gamma_{k+1}$.
This is needed as $\Gamma_{k+1}$ only has the 
necessary information about the identically-observable histories,
needed in the semantics of coalition operators.

In case $\phi_k$ is atomic,
$\Gamma_k=(Q_{k-1},C,\delta_{k-1},Q_0^{k-1},Prop_k ,\lambda_k)$ where
$\lambda_k(q)\cap Prop_{k-1}=\lambda_{k-1}(q)$ and
$p_{\phi_k}\in\lambda_k(q)$ iff $\phi_k\in\lambda_{k-1}(q)$. In case
$\phi_k$ is not atomic, the construction of $\Gamma_k$ depends on the
main connective of $\chi_k$:
\begin{enumerate}
\item Let $\chi_k$ be a boolean combination of atoms from
  $Prop_{k-1}$. Then $\Gamma_k =
  (Q_{k-1},C,\delta_{k-1},Q_0^{k-1},Prop_k ,\lambda_k)$ where
  $\lambda_k(q)\cap Prop_{k-1}=\lambda_{k-1}(q)$ and
  $p_{\phi_k}\in\lambda_k(q)$ iff the boolean formula $\bigwedge_{p\in
    \lambda_{k-1}(q)} p$ implies $\chi_k$.
\item Let $\chi_k$ be $K_A p$ for some $p\in Prop_{k-1}$. Consider the arena 
$\widehat {(\Gamma_{k-1})}_A$ defined as in Subsection \ref{sec:031}. Then
  $\Gamma_k = (\widehat Q_{k-1},C,\widehat\delta_{k-1},\widehat
  Q_0^{k-1}, Prop_k ,\lambda_k)$ where
  $\lambda_k(q)\cap Prop_{k-1}=\widehat\lambda_{k-1}(q)$ and
  $p_{\phi_k}\in \lambda_k(q)$ iff $q\in \widehat Q_{k-1}^{K_A p}$,
  where $\widehat Q_{k-1}^{K_A p}$ is defined in (\ref{id:q-hat}).
\item Let $\chi_k$ be $\cenxt p$ for some $p\in Prop_{k-1}$. Consider
  $\widehat{(\Gamma_{k-1})_A}$. Then $\Gamma_k = (\widehat
  Q_{k-1},C,\widehat \delta_{k-1},\widehat Q_0^{k-1},Prop_k
  ,\lambda_k)$ where
  $\lambda_k(q)\cap Prop_{k-1}=\widehat\lambda_{k-1}(q)$ and
  $p_{\phi_k}\in \lambda_k(q)$ iff $q \in \widehat Q_{k-1}^{\cenxt
    p}$, where $\widehat Q_{k-1}^{\cenxt p}$ is defined in
  (\ref{id:next}).
\item Let $\chi_k$ be $\ceoal p_1 \untl p_2$ for some
  $p_1,p_2\in Prop_{k-1}$. Consider $\widehat{(\Gamma_{k-1})_A}$ again
  and, for each state $\widehat q \in \widehat Q_{k-1}$, construct the
  tree automaton $\breve\AAA_{\widehat q}$.  Then put $\Gamma_k =
  (\widehat Q_{k-1},C,\widehat \delta_{k-1},\widehat Q_0^{k-1},
  Prop_k,\lambda_k)$ where
  $\lambda_k(q)\cap Prop_{k-1}=\widehat\lambda_{k-1}(q)$ and
  $p_{\phi_k}\in \lambda_k(q)$ iff $L(\breve\AAA_{\widehat q}) \neq
  \emptyset$.
\item Finally, let $\chi_k$ be $\ceoal p_1 \wkuntl p_2$ for some
  $p_1,p_2\in Prop_{k-1}$. Consider $\widehat{(\Gamma_{k-1})_A}$ again
  and, for each state $\widehat q \in \widehat Q_{k-1}$, construct the
  tree automaton $\widetilde\AAA_{\widehat q}$.  Then put $\Gamma_k =
  (\widehat Q_{k-1},C,\widehat \delta_{k-1},\widehat Q_0^{k-1},
  Prop_k,\lambda_k)$ where
  $\lambda_k(q)\cap Prop_{k-1}=\widehat\lambda_{k-1}(q)$ and
  $p_{\phi_k}\in \lambda_k(q)$ iff $L(\widetilde\AAA_{\widehat q}) \neq
  \emptyset$.
\end{enumerate}

The following result is a direct consequence of Propositions \ref{P1}, \ref{PU}, and \ref{PW}.

\begin{theorem}
Let $\Gamma_n =(Q_n,C,\delta_n,Q_0^n,Prop_n ,\lambda_n)$ be the last game arena obtained in the algorithm described above. Then,
\[
p_{\phi}\in \lambda_n(q),\mbox{ for all states }q\in Q_0^n\quad \mbox{ iff }\quad(\Gamma,\rho,0)\models\phi,\mbox{ for all runs }\rho \in \runsinf(\Gamma).
\]
\end{theorem}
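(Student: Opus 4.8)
The plan is to prove a stronger statement by induction on $k$: a correctness invariant relating each intermediate arena $\Gamma_k$ to $\Gamma$. Note first that $\Gamma_k$ is obtained from $\Gamma=\Gamma_0$ by a finite sequence of steps, each of which either leaves the state space unchanged (the atomic and the boolean cases) or performs exactly one subset construction $\widehat{(\cdot)}_A$ for the coalition $A$ occurring in $\chi_k$ (the $K_A$, $\cenxt$, $\ceoal\untl$ and $\ceoal\wkuntl$ cases), with the fresh atoms $p_{\phi_j}$ inserted by relabellings that do not touch $Prop$. Iterating the run bijection of Subsection~\ref{sec:031} therefore gives, for every $\rho\in\runsinf(\Gamma)$, a uniquely determined run $\rho_k\in\runsinf(\Gamma_k)$, and this correspondence is onto. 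The invariant $I_k$ I would carry along is: for every $\rho\in\runsinf(\Gamma)$, every position $i$ and every $j\leq k$, $(\Gamma_k,\rho_k,i)\models p_{\phi_j}$ iff $(\Gamma,\rho,i)\models\phi_j$; equivalently, using Proposition~\ref{P1}(\ref{P1item4}) to transport satisfaction of $\phi_j$ along the correspondence, the equivalence $p_{\phi_j}\Leftrightarrow\phi_j$ holds at every position of every run of $\Gamma_k$ for all $j\leq k$.

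For the induction step, assume $I_{k-1}$. The part of $I_k$ concerning $j<k$ is inherited: it is immediate when step $k$ does not change the state space, and when it performs a subset construction it follows from Proposition~\ref{P1} (item~(\ref{P1item4}) transports satisfaction of $\phi_j$ along the hat correspondence, $\widehat\lambda(q,S)=\lambda(q)$ preserves the labelling by the atoms $p_{\phi_j}$, and the hat correspondence composes with the earlier ones). For $j=k$ I would split on the unique top-level connective of $\chi_k$. By $I_{k-1}$ and the routine fact that $\models$ is compositional under replacing a subformula by one equivalent to it at every position, $\chi_k$ and $\phi_k$ have the same truth value at every $(\Gamma_{k-1},\rho_{k-1},i)$, hence, by iterating Proposition~\ref{P1}(\ref{P1item4}), at every $(\Gamma,\rho,i)$. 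It then remains to check, in each case, that the labelling clause of step $k$ marks exactly the states where $\chi_k$ holds in the relevant arena: for atomic $\phi_k$ and for boolean $\chi_k$ this holds because satisfaction of such a formula at a state depends only on the atoms true there; for $\chi_k=K_A p$ it is Proposition~\ref{P1}(\ref{P1item3}), which is the content of the definition of $\widehat Q^{K_A p}$ in (\ref{id:q-hat}); for $\chi_k=\cenxt p$ one verifies directly that (\ref{id:next}) transcribes the semantics of $\cenxt$ at a state $(q,S)$ --- a one-step strategy is just a choice of $c\in C_A$ on the current $A$-history, all runs $\rho'$ with $\rho'[0..i]\sim_A\rho[0..i]$ traverse states sharing the second component $S$ with $\rho[i]$ because that component is a function of the $A$-history, and $p$ is an atom, so the quantifier pattern in (\ref{id:next}) is precisely the one required; finally, for $\chi_k=\ceoal p_1\untl p_2$ and $\chi_k=\ceoal p_1\wkuntl p_2$ it is Propositions~\ref{PU} and~\ref{PW} respectively, together with the decidability of emptiness for tree automata with occurrence conditions, which is what turns the labelling into an effective step. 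Combining the marking statement with the equivalence $\chi_k\equiv\phi_k$ just established yields $(\Gamma_k,\rho_k,i)\models p_{\phi_k}$ iff $(\Gamma,\rho,i)\models\phi_k$, completing $I_k$.

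With $I_n$ proved, the theorem follows by taking $i=0$ and $j=n$: since $p_\phi=p_{\phi_n}$ is an atom, it lies in $\lambda_n(q)$ for all $q\in Q_0^n$ iff $(\Gamma_n,\rho_n,0)\models p_\phi$ for every initialized run $\rho_n$ of $\Gamma_n$ (each initial state being $\rho_n[0]$ for some such run), and by $I_n$ this holds iff $(\Gamma,\rho,0)\models\phi$ for every $\rho\in\runsinf(\Gamma)$. I expect the main obstacle to lie not in any individual construction --- each is handed to us ready-made by Propositions~\ref{P1}, \ref{PU} and~\ref{PW} --- but in the bookkeeping that glues them together: making precise that the fresh atoms are interchangeable with the subformulas they abbreviate (compositionality plus $I_{k-1}$), tracking how the various subset constructions, for different coalitions and at different steps, compose on runs, and justifying that $\chi_k$, whose truth in $\Gamma_{k-1}$ is in general history-dependent, becomes a pure state property once the subset construction for its coalition has been carried out, which is exactly what makes reading it off the labelling of $\Gamma_k$ legitimate.
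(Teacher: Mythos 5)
Your proof is correct and follows essentially the same route as the paper, which simply asserts the theorem to be a direct consequence of Propositions \ref{P1}, \ref{PU} and \ref{PW}: your invariant $I_k$ and the case split on the main connective of $\chi_k$ are exactly the bookkeeping the paper leaves implicit. Nothing in your argument deviates from the intended proof; it merely makes explicit the run correspondence across steps and the use of Proposition \ref{P1}(\ref{P1item4}) to transport satisfaction, which is a fair (and welcome) elaboration rather than a different approach.
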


\section{Concluding remarks}

We have presented a model-checking technique for \ATLK{}, a variant of the Alternating Temporal Logic with Knowledge, 
in which coalitions may coordinate their actions, based on their distributed knowledge of the system state. 
The technique is based on a state labeling algorithm which involves tree automata for identifying states 
to be labeled with cooperation modality subformulas, and a state splitting construction which 
serves for identifying (finite classes of) histories which are indistinguishable to some coalition.

According to our semantics, while distributed knowledge is used for constructing coalition strategies, 
it is assumed that the individual agents in the coalition gain no access to that knowledge 
as a side effect of their cooperation. 
That is why the proposed semantics corresponds to coalitions being organised under {\em virtual supervisors} 
who guide the implementation of strategies by receiving reports on observations of the coalitions' 
members and, in return, just directing the members' actions without making any other knowledge available to them.

The possibility of a subsequent increase of individual knowledge as a side effect 
of the use of distributed knowledge for coordinated action, 
which we avoid by introducing virtual supervisors, 
becomes relevant only in settings such as that of ATL with incomplete information. 
This possibility appears to be an interaction between the understanding of distributed knowledge 
as established in non-temporal epistemic logic and temporal settings. 
This is just one of the numerous subtle interpretation issues which were created by the straightforward 
introduction of modalities from non-temporal epistemic logic and cooperation modalities into temporal logics. 
For an example of another such issue,  a semantics for ATL in which agents, 
once having chosen a strategy for achieving a certain main goal, cannot revise it upon 
considering the reachability of subgoals, was proposed and studied in \cite{goranko-irrevocable2007}. 

The state labeling algorithm suggests that tree automata with partial observations 
and with partially-observable objectives might be useful to study.
We believe that the two state-labeling constructions can be generalized to such automata, 
giving us also a decision method for the ``starred'' version of \ATLK{}.

\subsection*{Acknowledgments}

Ferucio Tiplea, Ioana Boureanu and Sorin Iftene have participated in some 
preliminary discussions on the subject of this paper and have suggested several corrections 
that are included here.

\bibliographystyle{alpha}
\bibliography{epistemic,dima}
\nocite{goranko-drimmelen06}

\end{document}